\documentclass[11pt]{article}
\usepackage{amssymb, amsmath, amsthm}
\usepackage[round]{natbib}
\bibliographystyle{plainnat}

\setlength{\textheight}{8in}
\setlength{\topmargin}{-2pc}
\setlength{\oddsidemargin}{0pc}
\setlength{\evensidemargin}{0pc}
\setlength{\textwidth}{6.5in}

\usepackage{amsfonts,mathtools,mathrsfs,amssymb,amsthm,color}
\usepackage[colorlinks]{hyperref}
\usepackage{graphicx}
\usepackage{bm}

\newtheorem{theorem}{Theorem}[section]
\newtheorem{lemma}[theorem]{Lemma}
\newtheorem{corollary}[theorem]{Corollary}

\newtheorem{remark}{Remark}

\newcommand{\g}{\mathsf{g}}

\newcommand{\R}{\mathbb{R}} 
\newcommand{\N}{\mathcal{N}}

\newcommand{\E}{\mathbb{E}}

\renewcommand{\O}{\mathcal{O}}

\renewcommand{\part}[2]{\frac{\partial #1}{\partial #2}}

\DeclareMathOperator{\Tr}{Tr}

\newcommand{\Var}{\mathrm{Var}}

\newcommand{\HS}{\mathrm{HS}}

\newcommand{\MLA}{\mathrm{MLA}}

\newcommand{\Alg}{\mathsf{Alg}}

\newcommand{\X}{\mathcal{X}}
\newcommand{\Y}{\mathcal{Y}}
\newcommand{\grad}{\mathrm{grad}}

\title{The Mirror Langevin Algorithm Converges with Vanishing Bias}

\author{
Ruilin Li\thanks{Georgia Institute of Technology \& Hudson River Trading. Email: \texttt{liruilin1993@gmail.com} }
\and
Molei Tao\thanks{Georgia Institute of Technology, School of Mathematics. Email: \texttt{mtao@gatech.edu} }
\and
Santosh S.\ Vempala\thanks{Georgia Institute of Technology, College of Computing. Email: \texttt{vempala@gatech.edu} }
\and 
Andre Wibisono\thanks{Yale University, Department of Computer Science. Email: \texttt{andre.wibisono@yale.edu}}}

\begin{document}

\maketitle

\begin{abstract}
The technique of modifying the geometry of a problem from Euclidean to Hessian metric has proved to be quite effective in optimization, and has been the subject of study for sampling. The Mirror Langevin Diffusion (MLD) is a sampling analogue of mirror flow in continuous time, and it has nice convergence properties under log-Sobolev or Poincare inequalities relative to the Hessian metric, as shown by Chewi et al.\ (2020). In discrete time, a simple discretization of MLD is the Mirror Langevin Algorithm (MLA) studied by Zhang et al.\ (2020), who showed a biased convergence bound with a non-vanishing bias term (does not go to zero as step size goes to zero). This raised the question of whether we need a better analysis or a better discretization to achieve a vanishing bias. 
Here we study the Mirror Langevin Algorithm and show it indeed has a vanishing bias. We apply mean-square analysis based on Li et al.\ (2019) and Li et al.\ (2021) to show the mixing time bound for MLA under the modified self-concordance condition introduced by Zhang et al.\ (2020).
\end{abstract}

\section{Introduction}

Suppose we wish to sample from a probability distribution $\nu(x) \propto e^{-f(x)}$ supported on a convex set $\X \subseteq \R^d$
where $f \colon \X \to \R$ is differentiable.
A popular algorithm is the Unadjusted Langevin Algorithm (ULA), which is a basic discretization of the Langevin Dynamics in continuous time:
\begin{align*}
    dX_t = -\nabla f(X_t)\,dt + \sqrt{2}\,dW_t.
\end{align*}
Langevin Dynamics has an optimization interpretation as the gradient flow for minimizing relative entropy (KL divergence) with respect to $\nu$ using the Wasserstein metric $W_2$ in the space of probability distributions on $\R^d$, starting from the seminal work of~\cite{JKO98}; see also~\cite{W18}.
In continuous time, Langevin Dynamics has convergence guarantees in various distances, including $W_2$ distance, KL divergence, or $\chi^2$-divergence, under various conditions, such as strong log-concavity, or functional inequalities such as the Log-Sobolev Inequality (LSI) or Poincar\'e inequality.
In discrete time, ULA is a biased discretization of the Langevin Dynamics, and it has an asymptotic bias which scales with the step size.
In particular, by setting a small enough step size, we can obtain a mixing time bound of ULA which has inverse polynomial dependence on the error threshold; see for example~\cite{D17, durmus2017nonasymptotic, durmus2019high, DK19, VW19, li2019stochastic};~\cite{LZT21}.

In many settings, the problem of interest is non-smooth or constrained (e.g., the $L_1$ ball or a general polytope), and the basic Langevin algorithm does not apply. In optimization, this is handled effectively (and elegantly) by interior-point methods, which use a convex ``barrier" function to define a non-Euclidean metric. The resulting metric is given locally by the Hessian of the barrier function. This method results is a convergence bound that scales as $\sqrt{d}$ for linear and convex optimization.

It is natural to wonder whether such a modification of the geometry could be useful for sampling. Early evidence of this is the Dikin walk, which replaces the ball walk (a discrete-time implementation of constrained Brownian motion) by using an ellipsoid at each step, defined by the Hessian of the logarithmic barrier function. This walk was shown to converge in $\tilde{O}(md)$ steps for uniformly sampling a polytope with $m$ facets in $d$ dimension~\citep{KN12}. It was recently refined using a weighted barrier function to improve the convergence time to $\tilde{O}(d^2)$~\citep{LLV20}. 

A related approach that also originated in optimization is {\em mirror descent}, which uses a mirror map (the gradient of the barrier function) to change the geometry favorably, and in the context of sampling, can be seen as a generalization of the Langevin algorithm by changing the metric. 
For constrained sampling, the Mirror Langevin Dynamics was introduced by~\cite{ZPFP20} using a mirror map to constrain the domain; see also~\cite{Hsieh18} for a related earlier approach.
Mirror Langevin Dynamics is the Langevin dynamics for sampling from $\nu$ using the Hessian metric generated by the mirror map.
In continuous time, Mirror Langevin Dynamics has nice convergence guarantees under an analogous notion of mirror Poincare inequality relative to the Hessian metric, as shown by~\cite{ChewiEtAl20}; see also Appendix~\ref{App:CtsTime} for a review.
In discrete time, the Mirror Langevin Algorithm (MLA) is a simple discretization of MLD proposed by~\cite{ZPFP20}, who showed a biased convergence analysis but with a non-vanishing bias (does not go to $0$ with step size, but remains a constant).
This raised the question of whether we need a better analysis of MLA or a better discretization of MLD to achieve a vanishing bias.
This also led to an alternative discretization proposed by~\cite{AC20} which achieves a vanishing bias, but requires an exact simulation of the Brownian motion with changing covariance.

In this paper, we study the Mirror Langevin Algorithm and show that it indeed has a vanishing bias. 
The tool we will use is the mean-square analysis framework, proposed by~\cite{li2019stochastic} and then refined by~\cite{LZT21}; the latter version will be used. It will help establish a biased convergence analysis of MLA under relative smoothness, strong convexity and modified self-concordance; these are a subset of the conditions assumed by~\cite{ZPFP20}.
We show that the bias of MLA with step size $h$ scales as $\sqrt{h}$;
this leads to a $\tilde O(d/\epsilon^2)$ mixing time bound for MLA (see Theorem~\ref{Thm:Main} and Corollary~\ref{Cor:Main}).

\section{Algorithm and Problem Set-Up}
\label{Sec:SetUp}

\subsection{Problem set-up}

Suppose we want to sample from a probability distribution $\nu$ supported on a convex set $\X \subseteq \R^d$.
We assume $\nu$ is absolutely continuous with respect to the Lebesgue measure on $\R^d$ and has density $\nu(x) \propto e^{-f(x)}$ for some differentiable $f \colon \X \to \R$.

Let $\phi \colon \X \to \R$ be a twice-differentiable strictly convex function which is of {\em Legendre type}~\citep{Roc70}.
This implies $\nabla \phi(\X) = \R^d$, and in particular the gradient map $\nabla \phi \colon \X \to \R^d$ is bijective.
We also have $\nabla^2 \phi(x) \succ 0$ for all $x \in \X$. 
Moreover, we require that $\|\nabla \phi(x)\| \to \infty$ and $\nabla^2 \phi(x) \to \infty$ as $x$ approaches the boundary of $\X$.
Using the Hessian metric $\nabla^2 \phi$ on $\X$ will prevent the iterates from leaving the domain $\X$.
We call $\nabla \phi \colon \X \to \R^d$ the {\em mirror map} and $\Y = \nabla \phi(\X) = \R^d$ the {\em dual space}.

Let $\phi^\ast \colon \R^d \to \R$ be the {\em dual function} of $\phi$, defined by $\phi^\ast(y) = \sup_{x \in \X} \langle x,y \rangle - \phi(x)$.
Recall $\nabla \phi^\ast(y) = \arg \max_{x \in \X} \langle x,y \rangle - \phi(x)$, and we have $\nabla \phi^\ast = (\nabla \phi)^{-1}$, so $\nabla \phi(\nabla \phi^*(y)) = y$ for all $y \in \R^d$.
Furthermore, $\nabla^2 \phi(x) = \nabla^2 \phi^*(\nabla \phi(x))^{-1}$ for all $x \in \X$.

For a vector $v \in \R^d$, let $\|v\| = \sqrt{\langle v,v \rangle}$ be the $\ell_2$-norm.
For a matrix $A \in \R^{d \times d}$, let $\|A\|_{\HS} = \sqrt{\Tr(AA^\top)}$ be the Hilbert-Schmidt norm.

\subsection{Mirror Langevin Algorithm}

In this paper we study the {\bf Mirror Langevin Algorithm}:
\begin{align}\label{Eq:MLA}
x_{k+1} = \nabla \phi^*\left( \nabla \phi(x_k) - h \nabla f(x_k) + \sqrt{2h} \sqrt{\nabla^2 \phi(x_k)} \, z_k \right)
\end{align}
where $h > 0$ is step size and $z_k \sim \N(0,I)$ is an independent Gaussian random variable in $\R^d$.
Here $\sqrt{\nabla^2 \phi(x)}$ is a square-root of $\nabla^2 \phi(x)$, i.e.\ any matrix $C(x) \in \R^{d \times d}$ satisfying $C(x)C(x)^\top = \nabla^2 \phi(x)$.
This algorithm can be seen as a sampling version of the mirror descent algorithm from optimization, since we can write the update of MLA in the following form which resembles mirror descent:
\begin{align*}
x_{k+1} = \arg\min_{x \in \X} \left\{ \langle h \nabla f(x_k) - \sqrt{2h} \sqrt{\nabla^2 \phi(x_k)} \, z_k, \, x-x_k \rangle + D_\phi(x,x_k) \right\}
\end{align*}
where $D_\phi(x,x') = \phi(x) - \phi(x') - \langle \nabla \phi(x'), x-x' \rangle$ is the Bregman divergence of $\phi$.
In particular, in the {\em Euclidean case}, i.e.\ when $\X = \R^d$ and $\phi(x) = \frac{1}{2} \|x\|^2$, MLA recovers the usual Unadjusted Langevin Algorithm (ULA).

MLA can be seen as a coordinate-transformed Euler-Maruyama discretization of the {\bf Mirror Langevin Dynamics} in continuous time, given by
\[ \begin{cases}
    Y_t  &= \nabla\phi(X_t) \\
    dY_t &= -\nabla f(X_t) dt + \sqrt{2}\sqrt{\nabla^2\phi(X_t)}dW_t
\end{cases}. \]
See Section~\ref{sec:MLAinDualSpace} for a reformulation purely in the dual space, and Appendix~\ref{App:CtsTime} for more details on the continuous-time dynamics.
\cite{ZPFP20} studied MLA as a simple discretization of the Mirror Langevin Dynamics, and
showed that under certain assumptions, the iterates of MLA converge to a Wasserstein ball around the target with some radius which depends on the modified self-concordance parameter of $\phi$ (see Section~\ref{Sec:DiscResult} for more detail).
In the Euclidean case (when $\phi$ is quadratic) this radius is $0$, so MLA converges to $\nu$ with sufficiently small step size, recovering the typical bound for ULA.
However, for general $\phi$, the radius is positive.
Therefore, the result of~\cite{ZPFP20} only guarantees MLA enters a ball around the target, but it may not converge to the target even when the step size goes to $0$.
They further conjectured the bias is unavoidable. 
This raises an interesting question of whether the non-vanishing bias of MLA is indeed unavoidable because we are discretizing a diffusion process with changing covariance, or whether there is a better analysis of MLA with vanishing bias.
Here we show that indeed MLA has a vanishing bias, by applying the mean-square analysis framework of~\cite{li2019stochastic} and~\cite{LZT21}.

\subsection{Mirror Langevin Algorithm in the Dual Space}
\label{sec:MLAinDualSpace}
Let us work in the dual space $\Y = \nabla \phi(\X) = \R^d$ via the mirror map $\nabla \phi \colon \X \to \R^d$.
Given $x \in \X$, we define the {\em dual variable}
\begin{align}
y = \nabla \phi(x) \in \R^d
\end{align}
and its inverse is given by $x = \nabla \phi^*(y)$.
The target distribution $\tilde \nu$ on the dual space is the pushforward of the original target $\nu \propto e^{-f}$ under the mirror map:
$\tilde \nu = (\nabla \phi)_\# \nu$.
If we write the density as $\tilde \nu(y) \propto e^{-\tilde f(y)}$, then we have
$\tilde f(y) =  f(\nabla \phi^*(y)) - \log \det \nabla^2 \phi^*(y)$.
Moreover, the Hessian metric $\nabla^2 \phi(x)$ on $\X$ corresponds to the Hessian metric $\nabla^2 \phi^*(y)$ on $\R^d$ generated by the dual function $\phi^*$; that is, $\nabla^2 \phi^*$ on $\R^d$ is the pullback metric of $\nabla^2 \phi$ on $\X$ under the inverse mirror map $\nabla \phi^* \colon \R^d \to \X$.
Therefore, the metric space $(\X, \nabla^2 \phi)$ is isometric to $(\R^d, \nabla^2 \phi^*)$.

If $x_k \in \X$ follows the Mirror Langevin Algorithm~\eqref{Eq:MLA}, then $y_k = \nabla \phi(x_k) \in \R^d$ follows the Mirror Langevin Algorithm in the dual space:
\begin{align}\label{Eq:MLA2}
y_{k+1} = y_k - h \nabla f(\nabla \phi^*(y_k)) + \sqrt{2h} \sqrt{\nabla^2 \phi^*(y_k)^{-1}} \, z_k.
\end{align}
MLA in the dual space~\eqref{Eq:MLA2} can be seen as a discretization of the mirror Langevin dynamics to sample from $\tilde \nu \propto e^{-\tilde f}$ with the Hessian metric $\nabla^2 \phi^*$ on $\R^d$.

Let us define $g \colon \R^d \to \R^d$ and $A \colon \R^d \to \R^{d \times d}$ by
\begin{align}\label{Eq:g}
g(y) &= \nabla f(\nabla \phi^*(y)) \\
A(y) &= \sqrt{\nabla^2 \phi^*(y)^{-1}}. \label{Eq:A}
\end{align}
Note here $A(y)$ is {\em any} square-root of $\nabla^2 \phi^*(y)^{-1}$.
Then we can write MLA in the dual space as
\begin{align}\label{Eq:MLA3}
y_{k+1} = y_k - h g(y_k) + \sqrt{2h} A(y_k) \, z_k.
\end{align}
As $h \to 0$, MLA converges to the {\bf Mirror Langevin Dynamics}, which is a continuous-time stochastic process $Y_t \in \R^d$ following the stochastic differential equation:
\begin{align*}
    dY_t = -g(Y_t) dt + \sqrt{2} A(Y_t) dW_t
\end{align*}
where $W_t$ is the standard Brownian motion in $\R^d$; see Section~\ref{Sec:MLD} for more properties.

\subsection{Wasserstein distance in dual space}

Along MLA in the dual space~\eqref{Eq:MLA2}, let $\tilde \rho_k$ denote the distribution of the random variable $y_k \in \R^d$.
We will show a convergence analysis of MLA in the dual space in terms of the Euclidean Wasserstein distance $W_2$ between $\tilde \rho_k$ and $\tilde \nu$ on $\R^d$:
\begin{align}
W_{2}(\tilde \rho, \tilde \nu)^2 = \inf_{y \sim \tilde \rho, y^* \sim \tilde \nu} \E[\|y - y^*\|^2].
\end{align}
Note that this distance does not use the Hessian metric $\nabla^2 \phi^*$ on $\R^d$.
In the original space $\X$, this gives a modified $W_2$ distance under the mirror map:
\begin{align}
W_{2,\phi}(\rho,\nu)^2 = \inf_{x \sim \rho, x' \sim \nu} \E[\|\nabla \phi(x) - \nabla \phi(x')\|^2].
\end{align}
That is, if $\tilde \rho = (\nabla \phi)_\# \rho$ and $\tilde \nu = (\nabla \phi)_\# \nu$, then $W_{2,\phi}(\rho,\nu) = W_2(\tilde \rho, \tilde \nu)$.
This is the same modified Wasserstein distance that is used in~\cite{ZPFP20}.
This corresponds to using the {\em squared} Hessian metric $(\nabla^2 \phi(x))^2$ on $\X$, which is isometric to the Euclidean metric $I$ on $\R^d$ (rather than the Hessian metric $\nabla^2 \phi(x)$ on $\X$, which is isometric to the Hessian metric $\nabla^2 \phi^*(y)$ on $\R^d$, and which is used in the continuous-time analysis in~\cite{ChewiEtAl20}).

\section{Main Result: Mixing Time Bound for MLA}

We present our main result on the mixing time bound of MLA.
We need the following assumptions. 

\begin{enumerate}
    \item[{\bf (A1)}] $\phi$ satisfies the {\bf modified self-concordance} property with parameter $\alpha > 0$, which means:
    \begin{align}
        \|\sqrt{\nabla^2 \phi(x')} - \sqrt{\nabla^2 \phi(x)}\|_{\HS} \le \sqrt{\alpha} \|\nabla \phi(x')-\nabla \phi(x)\|_2 ~~~~~ \forall \, x',x \in \X.
    \end{align}
    Equivalently, $A(y) = \sqrt{\nabla^2 \phi^*(y)^{-1}}$ is $\sqrt{\alpha}$-Lipschitz in the Hilbert-Schmidt norm:
    \begin{align}
        \|A(y') - A(y)\|_{\HS} \le \sqrt{\alpha} \|y'-y\|_2 ~~~~~ \forall \, y',y \in \R^d.
    \end{align}

    \item[{\bf (A2)}] $f$ is {\bf $M$-smooth} with respect to $\phi$ for some $0 < M < \infty$, which means:
     \begin{align}
        \|\nabla f(x')-\nabla f(x)\|_2 \le M \|\nabla \phi(x')-\nabla \phi(x)\|_2  ~~~~~~ \forall \, x',x \in \X.
    \end{align}
   Equivalently, $g(y) = \nabla f(\nabla \phi^*(y))$ is $M$-Lipschitz:
    \begin{align}
        \|g(y') - g(y)\|_2 \le M \|y'-y\|_2  ~~~~~~ \forall \, y',y \in \Y.
    \end{align}

    \item[{\bf (A3)}] $f$ is {\bf $m$-strongly convex} with respect to $\phi$ for some $0 < m \le M$, which means:
    \begin{align}
        \langle \nabla f(x') - \nabla f(x), \nabla \phi(x')-\nabla \phi(x) \rangle \ge m \|\nabla \phi(x')-\nabla \phi(x)\|^2_2 ~~~~~ \forall \, x',x \in \X.
    \end{align}
   Equivalently, $g(y) = \nabla f(\nabla \phi^*(y))$ is $m$-monotone:
    \begin{align}
        \langle g(y') - g(y), y'-y \rangle \ge m \|y'-y\|^2_2 ~~~~~ \forall \, y',y \in \R^d.
    \end{align}

\end{enumerate}

These are a subset of the assumptions in~\cite{ZPFP20}. 
In particular, we do not assume a bound on the commutator of $\nabla^2 f$ and $\nabla^2 \phi$.
Our main result is the following.

\begin{theorem}\label{Thm:Main}
Assume {\bf (A1)}, {\bf (A2)}, {\bf (A3)}, and assume $\alpha < m$.
There is a maximum step size $h_{\max} = \O\left(\frac{(m-\alpha)^2}{M^2(1+4\alpha)^2}\right)$ and constant $C_{\MLA} = \O\left( \frac{M (1+4\alpha) \sqrt{d} }{m-\alpha} \right)$, such that if we run MLA~\eqref{Eq:MLA} with $0 < h \le h_{\max}$ from any $x_0 \sim \rho_0$, then the iterates $x_k \sim \rho_k$ satisfy:
\begin{align}
    W_{2,\phi}(\rho_k, \nu) \le \sqrt{2} e^{-(m-\alpha) h k} W_{2,\phi}(\rho_0, \nu) + C_{\MLA} \sqrt{2h}.
\end{align}
Equivalently, if we run MLA in the dual space~\eqref{Eq:MLA2} with $0 < h \le h_1$ from any $y_0 \sim \tilde \rho_0$, then the iterates $y_k \sim \tilde \rho_k$ satisfy:
\begin{align}
    W_2(\tilde \rho_k, \tilde \nu) \le \sqrt{2} e^{-(m-\alpha) h k} W_2(\tilde \rho_0, \tilde \nu) + C_{\MLA} \sqrt{2h}.
\end{align}
\end{theorem}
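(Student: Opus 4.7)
Working entirely in the dual space (the isometry $W_{2,\phi}(\rho,\nu)=W_2(\tilde\rho,\tilde\nu)$ lets us reduce to the dual statement), I would apply the mean-square analysis framework. The plan is to couple the discrete iterate $\{y_k\}$ with the continuous-time Mirror Langevin Dynamics $\{Y_t\}$ started from $Y_0\sim\tilde\nu$ on the same probability space, driven by the same Brownian motion, so that $Y_{kh}\sim\tilde\nu$ for every $k$ by stationarity and $W_2(\tilde\rho_k,\tilde\nu)\le\sqrt{\E\|y_k-Y_{kh}\|^2}$. The core is to prove a one-step mean-square inequality
\[
\E\|y_{k+1}-Y_{(k+1)h}\|^2 \le \bigl(1-2(m-\alpha)h+M^2h^2\bigr)\,\E\|y_k-Y_{kh}\|^2 + B
\]
with local bias $B=O(h^2)$ (dimension-dependent prefactor); iterating this as a geometric series and taking square roots via $\sqrt{a+b}\le\sqrt a+\sqrt b$ will give the advertised contraction plus $\sqrt h$-bias bound.

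To extract the contraction factor I would decompose $y_{k+1}-Y_{(k+1)h}=E_1+E_2+E_3$ where
\[
E_1 = (y_k-Y_{kh}) - h\bigl(g(y_k)-g(Y_{kh})\bigr) + \sqrt{2h}\bigl(A(y_k)-A(Y_{kh})\bigr)z_k
\]
is the ``synchronous MLA coupling'' of $y_k$ and $Y_{kh}$ with $z_k=(W_{(k+1)h}-W_{kh})/\sqrt h$, and
\[
E_2 = -\int_0^h\!\bigl(g(Y_{kh})-g(Y_{kh+s})\bigr)ds,\quad E_3 = \sqrt2\int_0^h\!\bigl(A(Y_{kh})-A(Y_{kh+s})\bigr)dW_s
\]
are the drift- and noise-discretization corrections along the continuous trajectory. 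Expanding $\E\|E_1\|^2$, using $m$-monotonicity and $M$-Lipschitzness of $g$ on the drift part, and combining the identity $\E_{z_k}\|(A(y_k)-A(Y_{kh}))z_k\|^2=\|A(y_k)-A(Y_{kh})\|_{\HS}^2$ with the $\sqrt\alpha$-Lipschitz HS property of $A$ on the noise part, yields $\E\|E_1\|^2\le(1-2(m-\alpha)h+M^2h^2)\E\|y_k-Y_{kh}\|^2$; this is the step where the hypothesis $\alpha<m$ enters.

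For the correction terms, a standard It\^o computation combined with stationarity of $\tilde\nu$ gives $\E\|Y_{kh+s}-Y_{kh}\|^2\le 4s\,\E_{\tilde\nu}\|A\|_{\HS}^2+2s^2\E_{\tilde\nu}\|g\|^2$, and combining this with Cauchy--Schwarz on $E_2$ and It\^o isometry on $E_3$ produces $\E\|E_2\|^2+\E\|E_3\|^2=O\bigl(h^2(\alpha+hM^2)\,\E_{\tilde\nu}\|A\|_{\HS}^2\bigr)$. The cross terms $2\E\langle E_1,E_j\rangle$ need more care: $E_3$ is an It\^o martingale increment whose conditional mean given $(y_k,Y_{kh})$ vanishes, while $\E[E_2\mid y_k,Y_{kh}]$ is $O(h^2)$ by a second-order It\^o--Taylor expansion of $s\mapsto g(Y_{kh+s})$. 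Using Young's inequality with weights of order $h$ then absorbs the cross contributions without spoiling the contraction factor.

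The main obstacle I anticipate is bounding $\E_{\tilde\nu}\|A\|_{\HS}^2$ and $\E_{\tilde\nu}\|g\|^2$ under only the stated assumptions. I would derive the stationary It\^o identity
\[
\E_{\tilde\nu}\langle y-y^\star,g(y)\rangle = \E_{\tilde\nu}\|A(y)\|_{\HS}^2
\]
by applying It\^o's formula to $\|Y_t-y^\star\|^2$ along MLD and using stationarity (with $y^\star$ the unique zero of $g$, which exists by $m$-monotonicity), then bootstrap it against $m$-monotonicity $\E_{\tilde\nu}\langle y-y^\star,g(y)\rangle\ge m\,\E_{\tilde\nu}\|y-y^\star\|^2$ and the Young bound $\|A(y)\|_{\HS}^2\le(1+\epsilon)\|A(y^\star)\|_{\HS}^2+(1+\epsilon^{-1})\alpha\|y-y^\star\|^2$; choosing $\epsilon$ so the bootstrap closes under $\alpha<m$ yields $\E_{\tilde\nu}\|A\|_{\HS}^2$ controlled by $\|A(y^\star)\|_{\HS}^2$ and the gap $m-\alpha$, from which the $(1+4\alpha)$-type factor of $C_{\MLA}$ emerges. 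Finally, iterating the one-step inequality yields $\E\|y_k-Y_{kh}\|^2\le\gamma^{2k}W_2^2(\tilde\rho_0,\tilde\nu)+B/(1-\gamma^2)$ with $\gamma^2=1-2(m-\alpha)h+M^2h^2$, and the choice $h_{\max}=O((m-\alpha)^2/(M^2(1+4\alpha)^2))$ together with square-rooting produces the stated bound.
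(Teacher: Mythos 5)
Your overall architecture is sound and is a genuinely different, more self-contained route than the paper's. The paper never writes down a one-step recursion for $\E\|y_k-Y_{kh}\|^2$: it verifies four separate ingredients --- contraction of MLD with rate $\beta=m-\alpha$ (Lemma~\ref{Lem:Contraction}), a short-time deviation bound with $C_0=4M$ (Lemma~\ref{Lem:Deviation}), a local weak error of order $p_1=\tfrac32$ (Lemma~\ref{Lem:WeakError}) and a local strong error of order $p_2=1$ (Lemma~\ref{Lem:StrongError}) --- and then invokes the mean-square framework of Theorem~\ref{Thm:MeanSq} as a black box, coupling the discrete chain to the continuous MLD started from the \emph{same} initial distribution $\tilde\rho_0$ and bounding the tracking error uniformly in $k$. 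You instead couple to a stationary copy $Y_0\sim\tilde\nu$ and prove a single one-step inequality; your stationary It\^o identity $\E_{\tilde\nu}\langle y-y^\star,g(y)\rangle=\E_{\tilde\nu}\|A\|_{\HS}^2$ together with the bootstrap under $\alpha<m$ plays the role of the paper's growth bound (Lemma~\ref{Lem:Growth}) and of the constant $U$. Your route avoids the local weak error and the deviation lemma entirely; the price is that the factor $1-2(m-\alpha)h+M^2h^2$ plus the Young losses only yields the exponent $(m-\alpha)$ up to a constant-factor degradation, whereas the paper gets the clean $e^{-(m-\alpha)hk}$ by keeping the continuous contraction and the discretization error in separate terms of a triangle inequality.

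The one step that would fail as written is the cross term $2\,\E\langle E_1,E_3\rangle$. You argue it is harmless because $\E[E_3\mid y_k,Y_{kh}]=0$, but $E_1$ is \emph{not} measurable with respect to $(y_k,Y_{kh})$: it contains $\sqrt{2h}\,(A(y_k)-A(Y_{kh}))z_k$, built from the same Brownian increment that drives $E_3$, so the conditional-mean argument only kills the cross term with the $\mathcal{F}_{kh}$-measurable part of $E_1$. A blunt Cauchy--Schwarz/Young bound on the remaining piece gives $\E\|E_3\|^2/(\epsilon h)=O(\alpha h\,\E_{\tilde\nu}\|A\|_{\HS}^2)$, an $O(h)$ additive error per step that would make the asymptotic bias $O(1)$ and defeat the entire point of the theorem. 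The repair is to evaluate the martingale--martingale cross term by It\^o isometry, $2\int_0^h\E\,\Tr\bigl[(A(y_k)-A(Y_{kh}))^\top(A(Y_{kh})-A(Y_{kh+s}))\bigr]ds=O(\alpha h^{2})\,\sqrt{\E\|y_k-Y_{kh}\|^2}\,\sqrt{\E_{\tilde\nu}\|A\|_{\HS}^2}$, and only then apply Young; this restores the $O(h^2)$ local bias. Separately, your second-order It\^o--Taylor expansion of $s\mapsto g(Y_{kh+s})$ requires $g\in C^2$, which {\bf (A2)} does not provide; it is also unnecessary, since Cauchy--Schwarz plus Young applied to $2\,\E\langle E_1,E_2\rangle$ already yields an $O(h^2)$ contribution at the cost of a constant factor in the contraction rate.
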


See Section~\ref{Sec:ProofThmMain} for the proof of Theorem~\ref{Thm:Main} and explicit forms of the constant $C_{\MLA}$ and maximum step size $h_{\max}$.
This result shows MLA has a bias that is vanishing with step size, and thus we can reach an arbitrary accuracy by using a small enough step size.
In particular, this improves on the analysis in~\cite{ZPFP20}, which has a non-vanishing bias and under stronger assumptions.
By choosing a small step size, we obtain the following mixing time bound for MLA.

\begin{corollary}\label{Cor:Main}
For any (small) error threshold $\epsilon > 0$, to reach $W_2(\tilde \rho_k, \tilde \nu) \le \epsilon$, it suffices to run MLA in the dual space~\eqref{Eq:MLA2} with step size $h = \frac{\epsilon^2}{4C_{\MLA}^2}$ for $k = \tau_{W_2}(\epsilon)$ iterations where
\begin{align*}
    \tau_{W_2}(\epsilon) 
      &\le \frac{1}{(m-\alpha) h} \log \frac{2\sqrt{2} W_2(\tilde \rho_0, \tilde \nu)}{\epsilon}
      = \tilde O\left(\frac{C_{\MLA}^2}{(m-\alpha) \epsilon^2}\right)
      = \tilde O\left(\frac{M^2 (1+4\alpha)^2 d}{(m-\alpha)^3 \epsilon^2} \right).
\end{align*}
\end{corollary}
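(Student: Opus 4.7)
The corollary is essentially a direct balancing exercise on the two-term bound delivered by Theorem~\ref{Thm:Main}:
\[
W_2(\tilde\rho_k,\tilde\nu) \;\le\; \sqrt{2}\, e^{-(m-\alpha)hk}\, W_2(\tilde\rho_0,\tilde\nu) \;+\; C_{\MLA}\sqrt{2h}.
\]
The plan is to first pick the step size $h$ so that the bias term $C_{\MLA}\sqrt{2h}$ is at most a constant fraction of $\epsilon$, and then pick the number of iterations $k$ so that the geometric-decay term is at most the complementary fraction. Since both terms are monotone in the right direction, this immediately yields an upper bound on the mixing time $\tau_{W_2}(\epsilon)$.

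Concretely, I would set $h = \epsilon^2/(4C_{\MLA}^2)$, which makes $C_{\MLA}\sqrt{2h} = \epsilon/\sqrt{2}$, and then require
\[
\sqrt{2}\, e^{-(m-\alpha)hk}\, W_2(\tilde\rho_0,\tilde\nu) \;\le\; \bigl(1-\tfrac{1}{\sqrt{2}}\bigr)\epsilon,
\]
which rearranges to $k \ge \frac{1}{(m-\alpha)h}\log\!\left(\frac{\sqrt{2}\,W_2(\tilde\rho_0,\tilde\nu)}{(1-1/\sqrt{2})\epsilon}\right)$. Absorbing the constant $1-1/\sqrt{2}$ into the logarithm factor (and loosening it so $2\sqrt{2}$ appears in the numerator as stated) yields the stated bound on $\tau_{W_2}(\epsilon)$. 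Substituting $h=\epsilon^2/(4C_{\MLA}^2)$ into $\frac{1}{(m-\alpha)h}$ gives a prefactor $\frac{4C_{\MLA}^2}{(m-\alpha)\epsilon^2}$, and then plugging in $C_{\MLA} = \O\!\left(\frac{M(1+4\alpha)\sqrt{d}}{m-\alpha}\right)$ produces the dependence $\tilde\O\!\left(\frac{M^2(1+4\alpha)^2 d}{(m-\alpha)^3\epsilon^2}\right)$.

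The only thing beyond algebra is to verify that the chosen $h$ is actually admissible in Theorem~\ref{Thm:Main}, i.e.\ $h \le h_{\max} = \O\!\left(\frac{(m-\alpha)^2}{M^2(1+4\alpha)^2}\right)$. Substituting our choice of $h$, this becomes a condition of the form $\epsilon^2/d \lesssim 1$, which is harmless in the relevant small-$\epsilon$ regime (and can be made explicit by tightening the ``for any small $\epsilon$'' qualifier if desired). I do not expect any real obstacle here; the main ``content'' in the argument lives entirely in Theorem~\ref{Thm:Main}, and the corollary is just its quantitative consequence under an optimal choice of $h$. The logarithmic factor in $\tilde\O$ accounts for the $\log(W_2(\tilde\rho_0,\tilde\nu)/\epsilon)$ term coming from the inversion of the exponential decay.
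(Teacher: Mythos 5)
Your proposal is correct and is exactly the intended argument: the paper states Corollary~\ref{Cor:Main} without proof as an immediate consequence of Theorem~\ref{Thm:Main}, obtained by the same balancing of the bias term against the contraction term, followed by the same admissibility check $h\le h_{\max}$. The only quibble is constant bookkeeping: with $h=\epsilon^2/(4C_{\MLA}^2)$ the bias term equals $\epsilon/\sqrt{2}$, so your requirement on the decay term yields $\log\bigl(2(\sqrt{2}+1)\,W_2(\tilde\rho_0,\tilde\nu)/\epsilon\bigr)$, which is slightly \emph{larger} (not a loosening) than the stated $\log\bigl(2\sqrt{2}\,W_2(\tilde\rho_0,\tilde\nu)/\epsilon\bigr)$ --- a discrepancy inside the logarithm that is present in the paper's own statement and is in any case absorbed by the $\tilde{O}$.
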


\subsection{Discussion of result}
\label{Sec:DiscResult}

Theorem~\ref{Thm:Main} shows that MLA has a biased convergence guarantee where the bias scales as $O(\sqrt{d h})$ where $d$ is dimension and $h$ is step size (assuming $m, M, \alpha$ are independent of $d$ for now).
This leads to a mixing time bound of $\tilde O(d/\epsilon^2)$ for MLA.

Let us compare MLA with ULA (i.e., MLA in the Euclidean case with $\phi(x) = \frac{1}{2}\|x\|^2$).
Recall for ULA, the mean-square analysis by~\cite{LZT21} yields a biased convergence guarantee where the bias scales as $O(\sqrt{d} h)$ under an additional 3rd-order regularity condition on $f$.
This leads to a mixing time bound of $\tilde O(\sqrt{d}/\epsilon)$ for ULA.
We see the bias of MLA has a worse dependence on $h$ than the bias of ULA.
This is because the continuous-time Mirror Langevin Dynamics~\eqref{Eq:MLD} of MLA has a changing covariance, while the usual continuous-time Langevin Dynamics of ULA has a constant covariance; therefore, MLA incurs an additional stochastic error from the Brownian motion part, which is not incurred by ULA.
Formally, this is reflected in the orders of error of the two algorithms: We show below that MLA has local weak and strong errors of orders $p_1 =\frac{3}{2}$ at least and $p_2 = 1$ (note the local weak order of MLA is actually $p_1=2$, because it is the Euler-Maruyama discretization of an SDE; the multiplicative noise causes the strong error to lose half an order, but not the weak error (see e.g., \cite[page 14]{milstein2013stochastic}); however, we will see that as long as $p_1\geq p_2+\frac{1}{2}$, the order of the final sampling error is determined by $p_2$ but not $p_1$, and even though our $p_1=\frac{3}{2}$ bound is not tight in order, its constants can be made very explicit and hence helpful to later analysis). On the other hand, it is well known that ULA has local weak and strong error of orders $p_1 = 2$ and $p_2 = \frac{3}{2}$ because it is the Euler-Maruyama discretization of an SDE with additive noise (see \cite{milstein2013stochastic} for the general theory and \cite{LZT21} for details of worked out constants).
It would be interesting to understand whether we can improve the local errors and the bias of MLA, perhaps using more sophisticated discretization of MLD to improve the stochastic error.

Our result improves on the analysis of~\cite{ZPFP20}, who assume stronger assumptions (our assumptions {\bf (A1)}, {\bf (A2)}, {\bf (A3)}, along with two assumptions on the moment of $\nabla^2 \phi$ and a bound on the commutator of $\nabla^2 f$ and $\nabla^2 \phi$), and prove a biased convergence analysis where the bias scales as $O(\sqrt{d h} + r_0)$, where $r_0 = O(\sqrt{\alpha d})$ does not depend on $h$.
Note in the Euclidean case (when $\phi(x) = \frac{1}{2}\|x\|^2$), the modified self-concordance parameter is $\alpha = 0$, and thus $r_0 = 0$; 
but for general $\phi$, the asymptotic radius is positive: $r_0 > 0$, so the result of~\cite{ZPFP20} does not guarantee convergence to the target.
With our mean-square analysis, we have shown that in fact there is no dependence on this radius $r_0$, and the bias indeed scales as $O(\sqrt{d h})$.

We note our result uses the modified self-concordance property, as also used in~\cite{ZPFP20}. 
In one-dimension ($d=1$), modified self-concordance is equivalent to the classical self-concordance property: Both are equivalent to the condition that $x \mapsto 1/\sqrt{\phi''(x)}$ is a Lipschitz function.
However, in higher dimension, they are different.
In particular, modified self-concordance is not an affine-invariant property (in contrast to the classical self-concordance), and the parameter $\alpha$ can be arbitrarily large; see example in Appendix~\ref{App:LogBarrier}.
This is problematic since our convergence bound only holds when $\alpha$ is less than $m$ (the strong convexity parameter).
It would be desirable to have an analysis of MLA with the more natural self-concordance property.

Our result in Theorem~\ref{Thm:Main} shows that to obtain a consistent algorithm (with a vanishing bias) from MLD, it suffices to apply a simple discretization such as MLA.
This shows we do not need to use an exact simulator of the Brownian motion with changing covariance, as proposed by~\cite{AC20}, which allows a nice analysis under self-concordance property.
It would be interesting to bridge the analysis technique to MLA.

The relative smoothness {\bf (A2)} and relative strong convexity {\bf (A3)} conditions imply that the Hessian of $f$ are bounded by the Hessian of $\phi$:
\begin{align*}
m \nabla^2 \phi(x) \preceq \nabla^2 f(x) \preceq M \nabla^2 \phi(x) ~~~~~~ \forall ~ x \in \X.
\end{align*}
See~\cite[Appendix~B]{ZPFP20} for more details.
Since we assume $\phi$ is a Legendre function, $\nabla^2 \phi(x) \to \infty$ as $x \to \partial \X$;
then for our result to hold, we need $\nabla^2 f \to \infty$ as $x \to \partial \X$.
This restricts the applicability of the result; for example, it does not apply when $\nu$ is a uniform ($f = 0$) or Gaussian distribution ($f$ is quadratic) restricted on a polytope with $\phi$ being the log-barrier function.
It is desirable to have a more general convergence analysis of MLA under weaker conditions on $f$ and $\phi$.

\section{Proof of main result}

The proof of Theorem~\ref{Thm:Main} uses the mean-square analysis framework described in~\cite{LZT21}. 
We review the mean-square analysis framework in Section~\ref{Sec:MeanSquare}.
We verify the conditions hold for MLA in Section~\ref{Sec:MLACheck}, and apply the mean-square analysis to prove Theorem~\ref{Thm:Main} in Section~\ref{Sec:ProofThmMain}.

\subsection{A review of the mean-square analysis framework}
\label{Sec:MeanSquare}

Mean-square analysis was a classical tool for analyzing the integration error of SDEs (e.g., \cite{milstein2013stochastic}).~\cite{li2019stochastic} extended it to obtain non-asymptotic sampling error bound of an algorithm which is a discretization of a decaying stochastic differential equation (SDE). While~\cite{li2019stochastic} required the local errors to satisfy uniform bounds,~\cite{LZT21} relaxes this requirement and only needs non-uniform bounds. We will establish non-uniform local error bounds for MLA, and thus use the version of mean-square analysis in~\cite{LZT21}. The results will be reviewed in a simplified setting; see~\cite[Section~3]{LZT21} for details.

\paragraph{Contractive SDE.}
Consider a continuous-time process $Y_t \in \R^d$ which evolves following the SDE:
\begin{align}\label{Eq:SDE}
dY_t = -g(Y_t) \, dt + \sqrt{2} A(Y_t) \, dW_t
\end{align}
for some vector field $g \colon \R^d \to \R^d$ and matrix $A \colon \R^d \to \R^{d \times d}$.
We assume $g$ and $A$ are Lipschitz continuous.
Here $W_t$ is the standard Brownian motion in $\R^d$.

We say the SDE~\eqref{Eq:SDE} is {\bf contractive} with rate $\beta > 0$ if there exists $t_0>0$ such that any two solutions $Y_t, Y_t'$ with synchronous coupling (i.e.\  driven by the same Brownian motion) satisfy:
\begin{align}
\E[\|Y_t-Y_t'\|^2] \le e^{-2\beta t} \E[\|Y_0 - Y_0'\|^2] ~~~~~~ \forall ~ t \in (0,t_0).
\end{align}
If the SDE~\eqref{Eq:SDE} is contractive, then it has a stationary distribution $\tilde \nu$.

\paragraph{Short-time deviation.}
Since $g$ and $A$ are Lipschitz continuous,
one can show~\cite[Lemma~1.3]{milstein2013stochastic} that there exist a maximum time $t_{0} > 0$ and a constant $C_0 > 0$ such that for any solutions $Y_t, Y_t'$ with synchronous coupling:
\begin{align}
\E[\|(Y_t' - Y_0') - (Y_t - Y_0)\|^2_2] \le C_0 \, \E[\|Y_0'-Y_0\|^2_2] \, t ~~~~~~ \forall ~ 0 < t \le t_{0}.
\end{align}

\paragraph{Algorithm and local error.}
Suppose we have an algorithm $\Alg_h$ depending on a step size $h > 0$
that simulates the solution $Y_t$ of the SDE~\eqref{Eq:SDE} at time $t = h$.

For any $Y_0 \in \R^d$, let $Y_h$ denote the solution of the SDE~\eqref{Eq:SDE} at time $t = h$,
and let $\bar Y_1 = \Alg_h(Y_0)$ denote the output of the algorithm from $Y_0$.
We say that the algorithm has (non-uniform) {\bf local weak error} of order $p_1$ if there exist a maximum step size $h_1 > 0$ and constants $C_1, D_1 \ge 0$ such that
\begin{align}
\| \E[Y_h - \bar Y_1] \| \le \left(C_1 + D_1 \sqrt{\E[\|Y_0\|^2} \right) h^{p_1} ~~~~~~ \forall ~ 0 < h \le h_1.
\end{align}
We say the algorithm has (non-uniform) {\bf local strong error} of order $p_2$ if there exist a maximum step size $h_2 > 0$ and constants $C_2, D_2 \ge 0$ such that
\begin{align}
\E[\| Y_h - \bar Y_1\|^2] \le \left(C_2^2 + D_2^2 \E[\|Y_0\|^2 \right) h^{2p_2} ~~~~~~ \forall ~ 0 < h \le h_2.
\end{align}
Here $Y_h$ and $\bar Y_1$ are coupled by sharing the same filtration (i.e.\ the algorithm $\Alg_h$ has access to the realization of the Wiener process that generates $Y_h$).

When $D_1=D_2=0$, the bounds are termed as {\em uniform bounds} in~\cite{li2019stochastic}.

\paragraph{Bound on global error.}
With the set-up above, the mean-square analysis framework produces the following bound on the global (long-term) error.

\begin{theorem}[{{\cite[Theorem~3.3,~3.4]{LZT21}}}]\label{Thm:MeanSq}
Assume the SDE~\eqref{Eq:SDE} is contractive with rate $\beta > 0$.
Assume the algorithm $\Alg_h$ has local weak error of order $p_1$ and local strong error of order $p_2$ with $\frac{1}{2} < p_2 \le p_1 -\frac{1}{2}$.
Let us define a maximum step size $h_{\max} > 0$ by
\begin{align}
h_{\max} = \min\left\{ t_0, h_1, h_2, \frac{1}{4\beta}, \left(\frac{\sqrt{\beta}}{4\sqrt{2} D_2}\right)^{\frac{1}{p_2-\frac{1}{2}}}, \, \left(\frac{\beta}{8\sqrt{2} (D_1 + C_0 D_2)}\right)^{\frac{1}{p_2-\frac{1}{2}}} \right\}
\end{align}
and constants $U = \sqrt{4 \E[\|Y_0\|^2] + 6 \E_{\tilde \nu}[\|Y\|^2]}$ and $C > 0$ by
\begin{align}
C = \frac{2}{\sqrt{\beta}} \left(\frac{C_1 + C_0 C_2 + \sqrt{2} U (D_1 + C_0 D_2)}{\sqrt{\beta}} + C_2 + \sqrt{2} D_2 U \right).
\end{align}
Starting from any $Y_0 = \bar Y_0 \sim \tilde \rho_0$, suppose we run the algorithm $\Alg_h$ with step size $0 < h \le h_{\max}$ to produce iterates $\bar Y_k = \Alg_h(\bar Y_{k-1}) \sim \tilde \rho_k$.
Let $Y_{hk}$ denote the solution to the SDE~\eqref{Eq:SDE} at time $t = hk$.
Then $\bar Y_k$ is close to $Y_{hk}$ at all time:
\begin{align}
\sqrt{\E[\|Y_{hk} - \bar Y_k\|^2]} \le C h^{p_2-\frac{1}{2}} ~~~~~~ \forall ~ k \ge 0.
\end{align}
Furthermore, the distribution of $\bar Y_k \sim \tilde \rho_k$ has the following biased convergence guarantee:
\begin{align}
W_2(\tilde \rho_k, \tilde \nu) \le \sqrt{2} e^{-\beta k h} W_2(\tilde \rho_0, \tilde \nu) + \sqrt{2} C h^{p_2 - \frac{1}{2}} ~~~~~~ \forall ~ k \ge 0.
\end{align}
\end{theorem}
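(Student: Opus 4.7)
The theorem is quoted from \cite{LZT21}, so the plan below sketches how I would reconstruct the proof myself along standard mean-square lines; the main point is how contraction of the SDE, the two orders of local error, and the short-time deviation assemble into a Gronwall-type recursion.

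The plan is to bound the one-step mean-square deviation $e_k := Y_{hk}-\bar Y_k$ by introducing, at each step, an auxiliary ``continuous from the last iterate'' process $Y^{(k)}_t$ defined as the SDE~\eqref{Eq:SDE} solved with $Y^{(k)}_{hk} = \bar Y_k$ and driven by the same Brownian path. Then I would decompose
\[
Y_{h(k+1)}-\bar Y_{k+1}
  = \bigl(Y_{h(k+1)}-Y^{(k)}_{h(k+1)}\bigr) + \bigl(Y^{(k)}_{h(k+1)}-\bar Y_{k+1}\bigr),
\]
so the first term is a difference of two exact SDE solutions with synchronously coupled Brownian drivers (hence a contractive piece), while the second term is precisely the one-step algorithm error whose mean and mean-square are controlled by the local weak and strong error assumptions.

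Next I would square and take expectations. The square of the first term is bounded using contractivity at time $h$ by $e^{-2\beta h}\E\|e_k\|^2$. The square of the second term is $\le (C_2^2 + D_2^2 \E\|\bar Y_k\|^2) h^{2p_2}$. For the cross term I would condition on $\mathcal F_k$ and split the local error into its conditional mean plus a mean-zero part:
\[
Y^{(k)}_{h(k+1)}-\bar Y_{k+1} = \mu_k + \eta_k, \qquad \mu_k := \E[\,\cdot\mid\mathcal F_k],
\]
so that $\|\mu_k\|\le (C_1+D_1\sqrt{\E\|\bar Y_k\|^2})\,h^{p_1}$ by local weak error. For the inner product with $\eta_k$ I would further rewrite
\[
Y_{h(k+1)}-Y^{(k)}_{h(k+1)} = e_k + \bigl[(Y_{h(k+1)}-Y_{hk}) - (Y^{(k)}_{h(k+1)}-Y^{(k)}_{hk})\bigr],
\]
noting that $\E[\langle e_k,\eta_k\rangle\mid\mathcal F_k]=0$; what remains is a short-time deviation (Lipschitz of $g,A$, controlled by $C_0$) times $\eta_k$, bounded by Cauchy-Schwarz. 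The net upshot is an inequality
\[
\E\|e_{k+1}\|^2 \le (1-\beta h)\,\E\|e_k\|^2 + \bigl(a + b\,\E\|\bar Y_k\|^2\bigr)\,h^{2p_2+1},
\]
using $p_1\ge p_2+\tfrac12$ to absorb the weak-error contribution.

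To deal with the non-uniform terms I would establish, in parallel, a uniform second-moment bound $\sup_k \E\|\bar Y_k\|^2 \le \O(\E\|Y_0\|^2 + \E_{\tilde\nu}\|Y\|^2)$; this follows from an analogous contractive one-step recursion on $\E\|\bar Y_k\|^2$, where the smallness of $h\le h_{\max}$ (whose last two entries are exactly designed for this) ensures the $D_1,D_2$ contributions do not destroy stability. Plugging this into the $e_k$-recursion and solving the linear recurrence gives a uniform bound $\E\|e_k\|^2\le C^2 h^{2p_2-1}$ with $C$ as stated. Finally, the Wasserstein bound follows from the triangle inequality
\[
W_2(\tilde\rho_k,\tilde\nu) \le W_2(\mathrm{Law}(\bar Y_k),\mathrm{Law}(Y_{hk})) + W_2(\mathrm{Law}(Y_{hk}),\tilde\nu),
\]
bounding the first summand by $\sqrt{\E\|e_k\|^2}$ and the second by $e^{-\beta hk}W_2(\tilde\rho_0,\tilde\nu)$ using contractivity of the SDE; the $\sqrt 2$ factors come from how the initial coupling between the SDE and algorithm paths interacts with the second-moment bootstrap. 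The step I expect to be the principal nuisance is the cross-term analysis: the condition $p_1\ge p_2+\tfrac12$ is exactly what lets the weak error and the short-time-deviation contribution be absorbed into an $O(h^{2p_2+1})$ increment, and keeping careful track of the $D_1,D_2$-dependent pieces (so that they feed back cleanly into the second-moment recursion without blowing up) is the delicate bookkeeping.
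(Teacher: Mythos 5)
This theorem is imported verbatim from \cite{LZT21} and the paper offers no proof of it, so there is nothing internal to compare against; your reconstruction follows exactly the standard mean-square argument that the cited source uses (one-step decomposition into a contractive piece plus a local error, conditioning to exploit the weak-error bound on the cross term, the short-time deviation bound for the residual cross term, a uniform second-moment bootstrap to handle the non-uniform $D_1,D_2$ terms, and a linear recursion). The only quibble is a bookkeeping slip: the per-step increment in your recursion should be $O\bigl(h^{2p_2}\bigr)$, not $O\bigl(h^{2p_2+1}\bigr)$ --- solving $(1-\beta h)$-contraction with increment $R$ yields $R/(\beta h)$, which is what produces the stated global order $h^{p_2-\frac{1}{2}}$ rather than $h^{p_2}$.
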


\subsection{Application to MLA}
\label{Sec:MLACheck}

For our sampling problem, 
we wish to apply the mean-square analysis framework to the Mirror Langevin Algorithm in the dual space~\eqref{Eq:MLA2}.
The continuous-time SDE~\eqref{Eq:SDE} of MLA is the Mirror Langevin Dynamics, which we review in the next section.
We establish the local error orders of MLA in the following section.

\subsubsection{Mirror Langevin Dynamics}
\label{Sec:MLD}

Consider the {\bf Mirror Langevin Dynamics (MLD)}, which is a stochastic process $Y_t \in \R^d$ following the SDE:
\begin{align}\label{Eq:MLD}
    dY_t = -g(Y_t) dt + \sqrt{2} A(Y_t) dW_t
\end{align}
where as defined in~\eqref{Eq:g} and \eqref{Eq:A}, 
$g(y) = \nabla f (\nabla \phi^\ast(y))$ and
$A(y) = \sqrt{\nabla^2 \phi^\ast(y)^{-1}}$.
The stationary distribution of MLD~\eqref{Eq:MLD} is the target distribution in the dual space: $\tilde \nu = (\nabla \phi)_\# \nu$.

By assumptions~{\bf (A1)} and {\bf (A2)}, $g$ and $A$ are Lipschitz continuous.
Let us establish the contractivity and deviation bound on MLD.
The proofs are provided in Appendix~\ref{Sec:Proofs}.

\begin{lemma}\label{Lem:Contraction}
Assume~{\bf (A1)} and {\bf (A2)} with $\alpha < m$.
Then MLD~\eqref{Eq:MLD} is contractive with rate $\beta = m - \alpha$.
\end{lemma}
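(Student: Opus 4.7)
The plan is to prove contraction by synchronous coupling and Itô's formula applied to the squared Euclidean distance of the two coupled processes. Since the assumptions involve the relative strong convexity parameter $m$ (which comes from \textbf{(A3)}), I will take \textbf{(A3)} as implicitly assumed alongside \textbf{(A1)}--\textbf{(A2)}; the Lipschitz conditions in \textbf{(A1)} and \textbf{(A2)} ensure strong existence and uniqueness of the SDE, so synchronous coupling is well-defined.

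First I would take two solutions $Y_t, Y_t'$ of~\eqref{Eq:MLD} driven by the same Brownian motion $W_t$. Their difference satisfies
\begin{align*}
d(Y_t - Y_t') = -\bigl(g(Y_t)-g(Y_t')\bigr)\,dt + \sqrt{2}\,\bigl(A(Y_t)-A(Y_t')\bigr)\,dW_t.
\end{align*}
Applying Itô's formula to the function $y \mapsto \|y\|^2$ gives
\begin{align*}
d\|Y_t - Y_t'\|^2 &= -2\langle Y_t-Y_t',\, g(Y_t)-g(Y_t')\rangle\,dt \\
&\quad + 2\sqrt{2}\,\langle Y_t-Y_t',\, (A(Y_t)-A(Y_t'))\,dW_t\rangle + 2\,\|A(Y_t)-A(Y_t')\|_{\HS}^2\,dt,
\end{align*}
where the quadratic-variation term uses that the diffusion coefficient for the difference process is $\sqrt{2}(A(Y_t)-A(Y_t'))$, whose squared Hilbert--Schmidt norm equals $2\|A(Y_t)-A(Y_t')\|_{\HS}^2$.

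Next I would take expectations, so the stochastic integral vanishes as a martingale (justified since $A$ is Lipschitz and both processes have finite second moments on bounded time intervals). Then I would apply assumption \textbf{(A3)} to lower-bound the drift term by $2m\,\E[\|Y_t-Y_t'\|^2]$ and apply assumption \textbf{(A1)} to upper-bound the diffusion term by $2\alpha\,\E[\|Y_t-Y_t'\|^2]$, yielding
\begin{align*}
\frac{d}{dt}\E[\|Y_t-Y_t'\|^2] \le -2(m-\alpha)\,\E[\|Y_t-Y_t'\|^2].
\end{align*}
Grönwall's inequality then gives $\E[\|Y_t-Y_t'\|^2] \le e^{-2(m-\alpha)t}\,\E[\|Y_0-Y_0'\|^2]$ for all $t \ge 0$, which is exactly the contractivity condition with rate $\beta = m-\alpha > 0$ (this also allows $t_0 = \infty$).

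No step here is a real obstacle; the calculation is standard. The only subtlety worth flagging is the use of the Hilbert--Schmidt norm in \textbf{(A1)} rather than the operator norm: this is precisely what is needed to control the Itô correction term, which involves the Frobenius/Hilbert--Schmidt norm of the diffusion matrix. Matching this norm choice to the Itô correction is the conceptual reason the modified self-concordance constant $\alpha$ appears linearly (not squared, and not via operator norm) in the contraction rate.
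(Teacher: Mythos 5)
Your proposal is correct and follows essentially the same route as the paper's proof: synchronous coupling, It\^o's formula on the squared Euclidean distance, bounding the drift term via the $m$-monotonicity and the It\^o correction via the $\sqrt{\alpha}$-Lipschitzness of $A$ in Hilbert--Schmidt norm, then Gr\"onwall. You are also right that \textbf{(A3)} (not \textbf{(A2)}) is the assumption actually doing the work in the drift bound --- the paper's own proof invokes \textbf{(A3)} despite the lemma statement listing only \textbf{(A1)} and \textbf{(A2)}, so your explicit flagging of this is a fair correction rather than a deviation.
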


\begin{lemma}\label{Lem:Deviation}
Assume~{\bf (A1)}, {\bf (A2)}, and {\bf (A3)} with $\alpha < m$.
Then any two solutions $Y_t, Y_t'$ of MLD~\eqref{Eq:MLD} with synchronous coupling satisfy
\begin{align}
    \E[\|(Y_t' - Y_0') - (Y_t - Y_0)\|^2] \le 4M \, \E[\|Y_0'-Y_0\|^2] \, t ~~~~~~ \forall ~ t \ge 0.
\end{align}
\end{lemma}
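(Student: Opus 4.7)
The plan is to express the difference-of-increments $D_t := (Y_t' - Y_0') - (Y_t - Y_0)$ as a drift integral plus an Itô integral driven by the differences $g(Y_s')-g(Y_s)$ and $A(Y_s')-A(Y_s)$, bound each piece using the Lipschitz assumptions \textbf{(A1)} and \textbf{(A2)}, and use the contractivity from Lemma~\ref{Lem:Contraction} to control the integrands by $\E\|Y_0'-Y_0\|^2$ uniformly in $s$. Since the drift and the diffusion enter with different powers of $t$, I expect the linear-in-$t$ scaling to come from absorbing a quadratic drift contribution into the linear one for the range of $t$ used downstream.

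Concretely, with $\Delta_t := Y_t' - Y_t$, synchronous coupling gives
\begin{equation*}
D_t = -\int_0^t \bigl(g(Y_s') - g(Y_s)\bigr)\,ds + \sqrt{2}\int_0^t \bigl(A(Y_s') - A(Y_s)\bigr)\,dW_s.
\end{equation*}
Applying $\|a+b\|^2 \le 2\|a\|^2 + 2\|b\|^2$, then Cauchy--Schwarz on the drift term and Itô's isometry on the diffusion term, and finally the Lipschitz bounds $\|g(Y_s')-g(Y_s)\|^2 \le M^2\|\Delta_s\|^2$ and $\|A(Y_s')-A(Y_s)\|_{\HS}^2 \le \alpha\|\Delta_s\|^2$, yields
\begin{equation*}
\E\|D_t\|^2 \le 2M^2 t \int_0^t \E\|\Delta_s\|^2\,ds + 4\alpha \int_0^t \E\|\Delta_s\|^2\,ds.
\end{equation*}
Lemma~\ref{Lem:Contraction} (together with \textbf{(A3)}, which gives monotonicity of $g$ with constant $m$) guarantees $\E\|\Delta_s\|^2 \le e^{-2(m-\alpha)s}\E\|\Delta_0\|^2 \le \E\|\Delta_0\|^2$, so the right-hand side is at most $\bigl(2M^2 t + 4\alpha\bigr)\,t\,\E\|\Delta_0\|^2$.

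To reach the stated $4M t\,\E\|\Delta_0\|^2$ bound, I would dispatch on the size of $t$. For $t \le 2(M-\alpha)/M^2$ the prefactor $2M^2 t + 4\alpha$ is at most $4M$, giving the bound directly. For larger $t$, the crude estimate $\|D_t\|\le \|\Delta_t\|+\|\Delta_0\|$ together with contractivity yields $\E\|D_t\|^2 \le 4\E\|\Delta_0\|^2 \le 4Mt\,\E\|\Delta_0\|^2$ once $Mt\ge 1$; matching the two regimes establishes the claim (possibly after trivially enlarging the constant to cover the crossover).

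I anticipate the only genuine subtlety is precisely this quadratic-in-$t$ drift contribution: a naive bound gives $(2M^2 t + 4\alpha)\,t$, not $4Mt$, so a clean linear statement for all $t$ needs the split above. This is cosmetic for the downstream mean-square analysis, which only consumes the bound for $t$ of order $h_{\max} \ll 1/M$; the rest of the proof is routine bookkeeping with Itô's isometry, Cauchy--Schwarz, and the hypotheses \textbf{(A1)}--\textbf{(A3)}.
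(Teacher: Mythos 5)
Your decomposition and the individual estimates are fine, but your route is genuinely different from the paper's and, as you yourself flag, it does not quite deliver the stated constant. The paper works in differential form: writing $V_t = (Y_t'-Y_0')-(Y_t-Y_0)$, the It\^o formula gives $\frac{d}{dt}\E[\|V_t\|^2] = -2\E[\langle g(Y_t')-g(Y_t),\,Y_t'-Y_t\rangle] + 2\E[\langle g(Y_t')-g(Y_t),\,Y_0'-Y_0\rangle] + 2\E[\|A(Y_t')-A(Y_t)\|_{\HS}^2]$, and the crucial move is to discard the first term as nonpositive using the monotonicity in \textbf{(A3)}. The surviving terms are bounded by $2M\,\E[\|Y_0'-Y_0\|^2] + 2\alpha\,\E[\|Y_0'-Y_0\|^2] \le 4M\,\E[\|Y_0'-Y_0\|^2]$ via Cauchy--Schwarz, \textbf{(A2)}, and the contraction from Lemma~\ref{Lem:Contraction}, so the derivative is bounded uniformly in $t$ and integration gives the linear bound with constant exactly $4M$ for all $t \ge 0$, with no case analysis. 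Your integral-form argument instead hits the entire drift difference with Cauchy--Schwarz, which is where the quadratic $2M^2t^2$ contribution comes from; you never invoke \textbf{(A3)} to kill the dominant part of the drift cross term, and that is the missing idea.

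The concrete consequence is that your patch does not close. Your two regimes are $t \le 2(M-\alpha)/M^2$ and $t \ge 1/M$, and these cover all of $[0,\infty)$ only when $\alpha \le M/2$. When $\alpha > M/2$ there is an uncovered window $t \in \bigl(2(M-\alpha)/M^2,\,1/M\bigr)$ in which your first estimate only gives prefactor $2M^2 t + 4\alpha < 2M + 4\alpha < 6M$. So as written you prove the lemma with constant $6M$ rather than $4M$. You are right that this is harmless downstream (it only perturbs $C_0$, hence $h_{\max}$ and $C_{\MLA}$, by absolute constants), but to obtain the statement as given you should switch to the monotonicity argument above.
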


We also need the following bound on MLD.
Let $x^\ast = \arg\min_{x \in \X} f(x)$ and $y^\ast = \nabla \phi(x^*) \in \R^d$.

\begin{lemma}\label{Lem:Growth}
Assume~{\bf (A1)}, {\bf (A2)}, and {\bf (A3)}. 
Along MLD~\eqref{Eq:MLD}, for $0 < t \le \frac{1}{M^2+4\alpha}$,
\begin{align}\label{Eq:Growth}
    \E[\|Y_t-Y_0\|^2] \le \gamma \, t
\end{align}
where $\gamma = 8(1 + 4\alpha) \E[\|Y_0\|^2] + 8(1+4\alpha) \|y^\ast\|^2 + 16\|A(y^\ast)\|^2_{\HS} + \frac{4}{M^2} \|g(y^\ast)\|^2$.
\end{lemma}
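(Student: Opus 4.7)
My approach is to apply It\^o's formula to the scalar process $\|Y_t - Y_0\|^2$ (with $Y_0$ treated as a random initial condition that will be integrated against at the end), and then close a linear differential inequality for $u(t) := \E\|Y_t - Y_0\|^2$ using Gr\"onwall. From MLD~\eqref{Eq:MLD}, It\^o's formula together with the Lipschitz coefficients (A1)--(A2), which guarantee the stochastic integral is a true martingale, gives
\begin{align*}
    \frac{d}{dt} u(t) = -2\,\E\langle Y_t - Y_0, g(Y_t)\rangle + 2\,\E\|A(Y_t)\|_{\HS}^2.
\end{align*}

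Next, I split $g(Y_t) = [g(Y_t) - g(Y_0)] + g(Y_0)$. Assumption (A3) gives $\langle Y_t - Y_0, g(Y_t) - g(Y_0)\rangle \ge m\|Y_t - Y_0\|^2$, contributing $-2m\,u(t)$; for the remaining cross term, I use Cauchy--Schwarz and AM--GM with the carefully chosen weight $M^2$:
\begin{align*}
    -2\langle Y_t - Y_0, g(Y_0)\rangle \le M^2 \|Y_t - Y_0\|^2 + \frac{1}{M^2}\|g(Y_0)\|^2.
\end{align*}
Assumption (A1) combined with $(a+b)^2 \le 2a^2 + 2b^2$ yields $\|A(Y_t)\|_{\HS}^2 \le 2\alpha\|Y_t - Y_0\|^2 + 2\|A(Y_0)\|_{\HS}^2$. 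Collecting, the coefficient of $u(t)$ in the resulting differential inequality is at most $M^2 + 4\alpha$ (the $-2m$ term only helps), and since $u(0) = 0$, Gr\"onwall together with the elementary bound $e^x - 1 \le 2x$ for $x \in [0,1]$ yields, for all $t \le \frac{1}{M^2 + 4\alpha}$,
\begin{align*}
    u(t) \le 2t\left[\frac{1}{M^2}\,\E\|g(Y_0)\|^2 + 4\,\E\|A(Y_0)\|_{\HS}^2\right].
\end{align*}

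To finish, I recentre the initial-moment bounds at $y^*$: by (A2), $\|g(Y_0)\|^2 \le 2M^2 \|Y_0 - y^*\|^2 + 2\|g(y^*)\|^2$; by (A1), $\|A(Y_0)\|_{\HS}^2 \le 2\alpha\|Y_0 - y^*\|^2 + 2\|A(y^*)\|_{\HS}^2$; and $\|Y_0 - y^*\|^2 \le 2\|Y_0\|^2 + 2\|y^*\|^2$. Plugging these in and collecting terms produces exactly the claimed value of $\gamma$.

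The main point of care is the choice of AM--GM weight $M^2$: this specific weight is what converts $\|g(y^*)\|^2$ into the stated $\frac{4}{M^2}\|g(y^*)\|^2$ contribution to $\gamma$ and keeps the coefficients on $\E\|Y_0\|^2$ and $\|y^*\|^2$ balanced at $8(1+4\alpha)$; any other weight leaves unbalanced constants. Using strong monotonicity (A3) instead of merely Lipschitzness (A2) for $\langle Y_t - Y_0, g(Y_t) - g(Y_0)\rangle$ is also essential: with only (A2), the drift term would contribute $+2M\,u(t)$ rather than $-2m\,u(t)$, and the time window $t \le 1/(M^2 + 4\alpha)$ would have to be shrunk accordingly.
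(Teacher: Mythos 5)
Your proposal is correct and follows essentially the same route as the paper: the same It\^o computation for $\E\|Y_t-Y_0\|^2$, the same splitting $g(Y_t)=(g(Y_t)-g(Y_0))+g(Y_0)$ with (A3) discarding the monotone part and AM--GM with weight $M^2$ on the cross term, the same quadratic recentering of $g$ and $A$ at $y^*$, and the same Gr\"onwall step with $e^{Ct}-1\le 2Ct$ on $t\le 1/(M^2+4\alpha)$, yielding exactly the stated $\gamma$. The only cosmetic difference is that you recentre $\|A(Y_0)\|_{\HS}^2$ at $y^*$ after closing the differential inequality rather than within it, which changes nothing.
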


\begin{remark}
In Lemma~\ref{Lem:Contraction} we show MLD is contracting if $\alpha < m$.
In general, a bound on $\alpha$ (the Lipschitz constant of the covariance) is necessary for an SDE with multiplicative noise to contract; see the example of the geometric Brownian motion in Appendix~\ref{App:GBM}.
\end{remark}

\subsubsection{Local Errors of the Mirror Langevin Algorithm}

Let us now consider the algorithm $\Alg_h$ to be the Mirror Langevin Algorithm in the dual space~\eqref{Eq:MLA2}.
We can show MLA has the following local errors.
The proofs are provided in Appendix~\ref{Sec:Proofs}.

\begin{lemma}\label{Lem:WeakError}
Assume~{\bf (A1)}, {\bf (A2)}, and {\bf (A3)}.
Then MLA~\eqref{Eq:MLA2} has local weak error at least of order $p_1 = \frac{3}{2}$, with maximum step size $h_1 = \frac{1}{M^2 + 4\alpha}$ and constants
\begin{align*}
C_1 &= 3M \sqrt{1 + 4\alpha} \,  \left( \|y^\ast\| + \|A(y^\ast)\|_{\HS} + \frac{1}{M} \|g(y^\ast)\| \right) \\
D_1 &= 2M \sqrt{1 + 4\alpha}.
\end{align*}
\end{lemma}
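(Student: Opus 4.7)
}

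The plan is to exploit the fact that both the Brownian increment in MLD and the Gaussian noise in MLA~\eqref{Eq:MLA2} are mean-zero, so that the weak error reduces to an error in the deterministic drift term that we can control by the Lipschitz continuity of $g$ together with the growth bound Lemma~\ref{Lem:Growth}.

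First I would write, from the It\^o form of MLD~\eqref{Eq:MLD},
\begin{align*}
    Y_h - Y_0 = -\int_0^h g(Y_s)\,ds + \sqrt{2}\int_0^h A(Y_s)\,dW_s,
\end{align*}
and observe that the It\^o integral is a martingale with zero expectation; similarly, the MLA update $\bar Y_1 = Y_0 - h g(Y_0) + \sqrt{2h}\,A(Y_0) z_0$ satisfies $\E[\bar Y_1 - Y_0 \mid Y_0] = -h g(Y_0)$. Taking expectations under the synchronous coupling (where $Y_h$ and $\bar Y_1$ see the same $Y_0$) yields
\begin{align*}
    \E[Y_h - \bar Y_1] = \int_0^h \E\bigl[g(Y_0) - g(Y_s)\bigr]\,ds.
\end{align*}

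Next I would bound the integrand. By the $M$-Lipschitz property of $g$ from {\bf (A2)} and Jensen's inequality,
\begin{align*}
    \bigl\|\E[g(Y_0) - g(Y_s)]\bigr\| \le M\,\E[\|Y_s - Y_0\|] \le M\sqrt{\E[\|Y_s - Y_0\|^2]}.
\end{align*}
For $s \le h_1 = 1/(M^2+4\alpha)$, Lemma~\ref{Lem:Growth} gives $\E[\|Y_s - Y_0\|^2] \le \gamma s$ with
\begin{align*}
    \gamma = 8(1+4\alpha)\E[\|Y_0\|^2] + 8(1+4\alpha)\|y^\ast\|^2 + 16\|A(y^\ast)\|_{\HS}^2 + \tfrac{4}{M^2}\|g(y^\ast)\|^2.
\end{align*}
Integrating $\int_0^h M\sqrt{\gamma s}\,ds = \tfrac{2M\sqrt{\gamma}}{3}\,h^{3/2}$ gives
\begin{align*}
    \|\E[Y_h - \bar Y_1]\| \le \tfrac{2M}{3}\sqrt{\gamma}\,h^{3/2}.
\end{align*}

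Finally I would split $\sqrt{\gamma}$ using subadditivity $\sqrt{a+b+c+d} \le \sqrt{a}+\sqrt{b}+\sqrt{c}+\sqrt{d}$ to separate the $Y_0$-dependent piece from the constant piece, yielding a term of the form $D_1\sqrt{\E[\|Y_0\|^2]}\,h^{3/2}$ with $D_1 = \tfrac{4\sqrt{2}}{3}M\sqrt{1+4\alpha} \le 2M\sqrt{1+4\alpha}$, and the remaining three terms contributing to $C_1$; a straightforward comparison (using $\sqrt{1+4\alpha} \ge 1$) shows each of those is dominated by the corresponding summand in $3M\sqrt{1+4\alpha}(\|y^\ast\| + \|A(y^\ast)\|_{\HS} + \|g(y^\ast)\|/M)$. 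This is essentially bookkeeping rather than a genuine obstacle: the only substantive input is Lemma~\ref{Lem:Growth}, whose restriction to $s \le 1/(M^2+4\alpha)$ is what forces $h_1 = 1/(M^2+4\alpha)$. If one wanted to sharpen the order from $p_1 = 3/2$ to the expected $p_1 = 2$ of Euler--Maruyama, one would need a finer expansion of $\E[g(Y_s) - g(Y_0)]$ using It\^o's formula applied to $g$, which requires stronger regularity on $g$ than {\bf (A2)} supplies; since mean-square analysis only needs $p_1 \ge p_2 + \tfrac12$ and we will get $p_2 = 1$, the bound $p_1 = 3/2$ suffices.
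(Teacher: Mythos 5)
Your proposal is correct and follows essentially the same route as the paper: reduce the weak error to $\int_0^h \E[g(Y_0)-g(Y_s)]\,ds$ (the paper does this via a synchronously coupled frozen-coefficient SDE whose time-$h$ value is the MLA iterate, which is the same observation that the noise terms cancel in expectation), then apply {\bf (A2)} and Lemma~\ref{Lem:Growth} to get $M\sqrt{\gamma s}$, integrate to obtain $\tfrac{2}{3}M\sqrt{\gamma}\,h^{3/2}$, and split $\sqrt{\gamma}$ by subadditivity into the stated $C_1$ and $D_1$. The constants check out, so there is nothing to add.
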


\begin{lemma}\label{Lem:StrongError}
Assume~{\bf (A1)}, {\bf (A2)}, and {\bf (A3)}.
Then MLA~\eqref{Eq:MLA2} has local strong error at least of order $p_2 = 1$, with maximum step size $h_2 = \frac{1}{M^2 + 4\alpha}$ and constants
\begin{align*}
C_2 &= 7(1+4\alpha) \left( \|y^\ast\| + \|A(y^\ast)\|_{\HS} + \frac{1}{M} \|g(y^\ast)\| \right) \\
D_2 &= 5 (1+4\alpha).
\end{align*}
\end{lemma}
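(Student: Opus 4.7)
The plan is to exploit the Euler-Maruyama structure of MLA: couple the algorithm and the SDE by sharing the driving Brownian motion, so that $\sqrt{h}\,z_0 = W_h$, and write the local error as a drift discrepancy plus a diffusion discrepancy. Concretely, from the integral form of MLD~\eqref{Eq:MLD} one obtains
\begin{align*}
Y_h - \bar Y_1 \;=\; -\int_0^h \bigl(g(Y_s) - g(Y_0)\bigr)\,ds \;+\; \sqrt{2}\int_0^h \bigl(A(Y_s) - A(Y_0)\bigr)\,dW_s.
\end{align*}
Applying $\|a+b\|^2 \le 2\|a\|^2 + 2\|b\|^2$ separates the two contributions, so the task reduces to bounding each of them.

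For the drift part I would apply Cauchy-Schwarz in time together with the $M$-Lipschitz continuity of $g$ from Assumption~{\bf (A2)}, producing an upper bound of $2 h M^2 \int_0^h \E[\|Y_s - Y_0\|^2]\,ds$. For the diffusion part I would use It\^o's isometry together with the $\sqrt{\alpha}$-Hilbert-Schmidt-Lipschitz property of $A$ from Assumption~{\bf (A1)}, producing an upper bound of $4\alpha \int_0^h \E[\|Y_s - Y_0\|^2]\,ds$. Both expressions feature $\E[\|Y_s - Y_0\|^2]$, which is exactly what Lemma~\ref{Lem:Growth} controls by $\gamma s$ on the time window $s \le 1/(M^2+4\alpha)$. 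Plugging this in and integrating yields
\begin{align*}
\E[\|Y_h - \bar Y_1\|^2] \;\le\; \gamma h^2 \bigl(M^2 h + 2\alpha\bigr) \;\le\; (1+2\alpha)\,\gamma\,h^2,
\end{align*}
where the last step uses $M^2 h \le 1$ for $h \le h_2 := 1/(M^2+4\alpha)$. This already establishes local strong order $p_2 = 1$.

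The remaining step is to repackage the right-hand side into the non-uniform template $(C_2^2 + D_2^2\,\E[\|Y_0\|^2])\,h^2$. Expanding $\gamma = 8(1+4\alpha)\,\E[\|Y_0\|^2] + 8(1+4\alpha)\|y^\ast\|^2 + 16\|A(y^\ast)\|_{\HS}^2 + \tfrac{4}{M^2}\|g(y^\ast)\|^2$, the coefficient of $\E[\|Y_0\|^2]$ is $8(1+2\alpha)(1+4\alpha)$, which is bounded by $25(1+4\alpha)^2 = D_2^2$ via the elementary inequality $8(1+2\alpha) \le 25(1+4\alpha)$. For the $y^\ast$-dependent part I would invoke $(a+b+c)^2 \ge a^2+b^2+c^2$ with nonnegative $a,b,c$ to absorb the remaining three terms into $C_2^2 = 49(1+4\alpha)^2 (\|y^\ast\| + \|A(y^\ast)\|_{\HS} + \tfrac{1}{M}\|g(y^\ast)\|)^2$ after checking three scalar inequalities (all of the form ``linear in $\alpha$ $\le$ quadratic in $\alpha$''). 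The derivation itself is the standard Euler-Maruyama strong-error argument for multiplicative noise; the main obstacle is not analytical but arithmetic, namely carefully tracking the loss of half an order from It\^o isometry and the dependence of $\gamma$ on $\E[\|Y_0\|^2]$ so that the explicit constants $C_2$ and $D_2$ stated in the lemma all fit, while remaining consistent with the maximum step size $h_2 = 1/(M^2+4\alpha)$ inherited from Lemma~\ref{Lem:Growth}.
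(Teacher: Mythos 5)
Your proposal is correct, and it reaches the stated conclusion by a route that differs from the paper's in one technical respect. The paper also couples the Euler step to MLD~\eqref{Eq:MLD} through a shared Brownian motion and leans on Lemma~\ref{Lem:Growth}, but instead of working with the integral representation it forms the auxiliary SDE $dY_t' = -g(Y_0)\,dt + \sqrt{2}A(Y_0)\,dW_t$ and differentiates $\E[\|Y_t - Y_t'\|^2]$, obtaining a self-referencing differential inequality
\begin{align*}
\frac{d}{dt}\E[\|Y_t-Y_t'\|^2] \le M^2 \E[\|Y_t-Y_t'\|^2] + (1+2\alpha)\gamma t
\end{align*}
which it closes with a Gr\"onwall step, paying an extra factor $e^{M^2 t} \le e < 3$ on the window $t \le 1/(M^2+4\alpha)$. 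Your decomposition into a drift discrepancy (handled by Cauchy--Schwarz in time and {\bf (A2)}) and a martingale discrepancy (handled by It\^o isometry and {\bf (A1)}) avoids Gr\"onwall entirely and yields the cleaner bound $(M^2 h + 2\alpha)\gamma h^2 \le (1+2\alpha)\gamma h^2$, which is a factor of roughly $3$ smaller than the paper's $3(1+2\alpha)\gamma h^2$; consequently your repackaging into $(C_2^2 + D_2^2\,\E[\|Y_0\|^2])h^2$ goes through with room to spare (e.g.\ $8(1+2\alpha)(1+4\alpha) \le 25(1+4\alpha)^2 = D_2^2$, and the $(a+b+c)^2 \ge a^2+b^2+c^2$ absorption into $C_2^2 = 49(1+4\alpha)^2(\cdot)^2$ is immediate). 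The only thing each approach ``buys'' over the other: the paper's differential formulation reuses verbatim the second-moment identity already set up for Lemmas~\ref{Lem:Contraction} and~\ref{Lem:Deviation}, so the whole appendix runs on one computational template, whereas your integral/isometry argument is shorter, sharper in constants, and makes the multiplicative-noise half-order loss (the $2\alpha\int_0^h \E[\|Y_s-Y_0\|^2]\,ds$ term contributing at order $h^2$ rather than $h^3$) visible at a glance. Both are valid; yours is arguably the more standard textbook Euler--Maruyama strong-error argument.
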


\subsection{Proof of Theorem~\ref{Thm:Main}: Convergence Rate of MLA}
\label{Sec:ProofThmMain}

\begin{proof}[Proof of Theorem~\ref{Thm:Main}]
Assume {\bf (A1)}, {\bf (A2)}, and {\bf (A3)} with $\alpha < m$.
We have verified that MLA satisfies the conditions in the mean-square analysis framework:
In Lemma~\ref{Lem:Contraction} we show MLD is contractive with rate $\beta = m - \alpha$.
We derive the deviation bound in Lemma~\ref{Lem:Deviation} with $C_0 = 4M$.
In Lemmas~\ref{Lem:WeakError} and~\ref{Lem:StrongError} we show MLA has local weak error of order $p_1 = \frac{3}{2}$ and local strong error of order $p_2 = 1$, and indeed $p_2 \le p_1 - \frac{1}{2}$.

Then by Theorem~\ref{Thm:MeanSq}, we can compute the maximum step size:
\begin{align*}
h_{\max} &= \min\left\{ \frac{1}{M^2+4\alpha}, \frac{1}{4\beta},  \left(\frac{\sqrt{\beta}}{4\sqrt{2} D_2}\right)^{\frac{1}{p_2-\frac{1}{2}}}, \, \left(\frac{\beta}{8\sqrt{2} (D_1 + C_0 D_2)}\right)^{\frac{1}{p_2-\frac{1}{2}}} \right\} \\
&= \min\Bigg\{\frac{1}{M^2+4\alpha}, \frac{1}{4(m-\alpha)}, \frac{m-\alpha}{800(1 + 4\alpha)^2},
 \frac{(m-\alpha)^2}{128 \left( 2M \sqrt{(1 + 4\alpha)} + 20M (1 + 8\alpha) \right)^2} \Bigg\} \\
&= \O\left(\frac{(m-\alpha)^2}{M^2(1+4\alpha)^2}\right).
\end{align*}

Recall $\tilde \nu = (\nabla \phi)_\# \nu$ is the target distribution of MLD~\eqref{Eq:MLD}.
We can compute the constant
\begin{align*}
    U = \sqrt{4 \E[\|Y_0\|^2_2] + 6 \E_{\tilde \nu}[\|Y\|^2_2]} = O(\sqrt{d}).
\end{align*}
Note that $\|A(y^*)\|_{\HS} = \sqrt{\Tr(A(y^*) A(y^*)^\top)} = \sqrt{\Tr(\nabla^2 \phi^*(y^*)^{-1})} = \sqrt{\Tr(\nabla^2 \phi(x^*))} = O(\sqrt{d})$.
Let us define $V := \|y^\ast\| + \|A(y^\ast)\|_{\HS} + \frac{1}{M} \|g(y^\ast)\| = O(\sqrt{d})$.
Then the resulting constant is
\begin{align*}
C_{\MLA} &= \frac{2}{\beta} \left(C_1 + C_0 C_2 + \sqrt{2} U(D_1 + C_0 D_2) \right) + \frac{2}{\sqrt{\beta}} \left( C_2 + \sqrt{2} D_2 U \right) \\
&= \frac{2}{m-\alpha} 
\Bigg(3M \sqrt{(1 + 4\alpha)} V + 28M (1+4\alpha) V + \sqrt{2} U \left( 2 M \sqrt{(1 + 4\alpha)} + 20M (1 + 4\alpha) \right) \Bigg) \\
&~~~~ + \frac{2}{\sqrt{m-\alpha}} \left( 7(1+4\alpha) V + 5\sqrt{2} (1 + 4\alpha) U \right) \\
&= \O\left( \frac{M (1+4\alpha) \sqrt{d} }{m-\alpha} \right).
\end{align*}
The conclusion of Theorem~\ref{Thm:Main} follows from Theorem~\ref{Thm:MeanSq}.
\end{proof}

\section{Discussion}

Our result leaves open many questions, including the following.
It would be interesting to consider a more sophisticated discretization of MLD 
such that the mean-square analysis framework will show improved local errors and smaller bias.

It would be interesting to have a better analysis of MLA under more natural conditions on $\phi$, such as self-concordance (rather than modified self-concordance), and under relaxed requirements on $f$ and $\phi$ (e.g.\ that allows us to sample from a uniform or Gaussian distribution on a polytope).
It would be desirable to have a convergence analysis of MLA in the Wasserstein distance generated by the Hessian metric $\nabla^2 \phi$ rather than the Euclidean metric, or in other measures such as KL or $\chi^2$-divergence.

It would be interesting to understand whether we can discretize the Newton Langevin Dynamics (which is the case when $\phi = f$ as described in Appendix~\ref{App:NLD} and which is affine-invariant in continuous time) and obtain a discrete-time algorithm with a convergence guarantee which is also affine-invariant.

It would also be interesting to understand whether we can derive a more general discrete-time analysis framework that works under a relaxed condition, e.g.\ without requiring contraction in continuous time, but only exponential convergence in function value (which is known for ULA under the log-Sobolev inequality, see for example~\cite{VW19}).

\newpage
\bibliography{reference}

\newpage
\appendix

\section{Riemannian and Mirror Langevin Dynamics in Continuous Time}
\label{App:CtsTime}

Consider the problem of sampling from $\nu \propto e^{-f}$ on $\X \subseteq \R^d$ as described in Section~\ref{Sec:SetUp}.

Suppose we endow $\X$ with a Riemannian metric $\g$, which we write as a positive definite matrix: $\g(x) \succ 0$ for all $x \in \X$.
This means at each point $x \in \X$ we measure local norm using the metric $\g(x)$:
$$\langle u,v \rangle_x := u^\top \g(x) v$$
for all $u,v$ in the tangent space. 
We assume $x \mapsto \g(x)$ is differentiable.
Let $M(x) = \g(x)^{-1}$ be the inverse matrix, and let $\sqrt{M(x)}$ be a square-root of $M(x)$.
Let $\nabla \cdot M(x) \in \R^d$ be the {\em divergence} of $M$, which is a vector-valued function whose entries are the divergences of the columns of $M$.
We assume $\g(x) \to \infty$ (equivalently, $M(x) \to 0$) as $x$ approaches the boundary of $\X$.

\subsection{Review for optimization}

Recall in optimization, the {\bf Riemannian gradient flow (RGF)} (or natural gradient flow) for minimizing $f$ using the metric $\g$ is the solution $X_t$ to the differential equation:
$$\dot X_t = \frac{d}{dt} X_t = -M(X_t) \, \nabla f(X_t).$$
Here we use the inverse metric $M(x) = \g(x)^{-1}$ to turn the $\ell_2$-gradient $\nabla f(x) = (\part{f(x)}{x_1},\dots,\part{f(x)}{x_d})$ into a gradient tangent vector $\grad \, f(x) = M(x) \nabla f(x)$ under the Riemannian metric $\g(x)$.
RGF has nice properties when the objective function $f$ satisfies some properties.
For example, if $f$ is {\em geodesically strongly convex} (which means $f$ is strongly convex along geodesics generated by the Riemannian metric $\g$), then RGF is exponentially contracting.
Moreover, if $f$ is {\em gradient dominated} with respect to $\g$,
then the function value $f(X_t)$ converges exponentially fast along RGF. 

Consider when the metric $\g(x)$ is given by the Hessian of a convex Legendre function $\phi$: $\g(x) = \nabla^2 \phi(x) \succ 0$.
Then the RGF becomes:
$$\dot X_t = -\nabla^2 \phi(X_t)^{-1} \, \nabla f(X_t).$$
In terms of the dual variable $Y_t = \nabla \phi(X_t)$, this becomes the {\bf mirror flow}:
$$\dot Y_t = -\nabla f(X_t) = -\nabla f(\nabla \phi^*(Y_t)).$$
Recall by the mirror map $\nabla \phi$, the metric $\nabla^2 \phi$ on $\X$ becomes the Hessian metric $\nabla^2 \phi^*$ on $\Y = \nabla \phi(X) = \R^d$.
The mirror flow is also the Riemannian gradient flow for minimizing the pushforward function $\tilde f(y) = f(\nabla \phi^*(y))$ under the Hessian metric $\nabla^2 \phi^*(y)$ (because $\grad \, \tilde f(y) = \nabla^2 \phi^*(y)^{-1} \nabla \tilde f(y) = \nabla f(\nabla \phi^*(y))$).
Discretizing the mirror flow gives the {\em mirror descent} algorithm in optimization.

\subsection{Riemannian Langevin Dynamics}

The {\bf Riemannian Langevin Dynamics (RLD)} for sampling from $\nu \propto e^{-f}$ on $\X$ using the metric $\g(x)$ is the solution $X_t$ to the stochastic differential equation:
\begin{align}\label{Eq:RLD}
dX_t = \left(\nabla \cdot M(X_t) - M(X_t) \, \nabla f(X_t)\right) dt + \sqrt{2} \sqrt{M(X_t)} \, dW_t.
\end{align}
Here $W_t$ is the standard Brownian motion in $\R^d$.
Since $M(x) \to 0$ as $x \to \partial \X$, the process does not leave $\X$: If $X_0 \in \X$, then $X_t \in \X$ for all $t > 0$.

The additional drift term $\nabla \cdot M(X_t)$ accounts for the covariance $M(X_t)$ in the Brownian motion.
The stationary distribution for RLD is $\nu(x) \propto e^{-f(x)}$ (the density is with respect to the Lebesgue measure $dx$ on $\R^d$).
This can be seen, for example, from the following Fokker-Planck equation.

If $X_t \in \X$ follows RLD~\eqref{Eq:RLD}, then its density $\rho_t \colon \X \to \R$ evolves following the partial differential equation (PDE):
\begin{align}\label{Eq:WLD_FP}
\part{\rho_t}{t} = \nabla \cdot \left( \rho_t M \nabla \log \frac{\rho_t}{\nu} \right). 
\end{align}
Clearly if $\rho_t = \nu$ then the dynamics is stationary.
Furthermore, the PDE above can be interpreted as the gradient flow for minimizing relative entropy with respect to the Wasserstein metric on the metric space $(\X,\g)$.

From the Fokker-Planck equation~\eqref{Eq:WLD_FP}, we can derive how fast the dynamics RLD approaches the target distribution $\nu$ in various measures.

For example, recall the {\em $\chi^2$-divergence} of a probability distribution $\rho$ with respect to $\nu$ is
\begin{align*}
\chi^2_\nu(\rho) = \Var_\nu\left(\frac{\rho}{\nu}\right) = \int_\X \nu(x) \left(\frac{\rho(x)}{\nu(x)}-1\right)^2 dx
= \int_\X \frac{\rho(x)^2}{\nu(x)} dx - 1.
\end{align*}
Then a standard calculation reveals that
the $\chi^2$-divergence is decreasing along RLD~\eqref{Eq:WLD_FP}:
\begin{align*}
\frac{d}{dt} \chi^2_\nu(\rho_t) = -2G_\nu(\rho_t)
\end{align*}
where
\begin{align*}
G_\nu(\rho) = \E_\nu\left[\left\| \nabla\left(\frac{\rho}{\nu}\right)\right\|^2_{M}\right]
= \int_{\X} \nu(x) \left\langle \nabla \left(\frac{\rho(x)}{\nu(x)}\right), M(x) \nabla \left(\frac{\rho(x)}{\nu(x)}\right) \right\rangle \, dx.
\end{align*}
Therefore, if $\nu$ satisfies a {\em Poincar\'e inequality} with respect to $\g$, which means for any differentiable function $h \colon \X \to \R$ we have
\begin{align}\label{Eq:Poincare}
    \Var_\nu(h) \le C_{\mathrm{P}} \, \E_\nu[\|\nabla h\|^2_M],
\end{align}
then we can conclude RLD converges exponentially fast in $\chi^2$-divergence: $\chi^2_\nu(\rho_t) \le e^{-\frac{2}{C_{\mathrm{P}}} t} \chi^2_\nu(\rho_0)$.

Similarly, recall the {\em relative entropy} (or {\em KL divergence}) of $\rho$ with respect to $\nu$ is
\begin{align*}
    H_\nu(\rho) = \E_\nu\left[ \frac{\rho}{\nu} \log \frac{\rho}{\nu} \right]
    = \int_\X \rho(x) \log \frac{\rho(x)}{\nu(x)} \, dx. 
\end{align*}
Then along RLD~\eqref{Eq:WLD_FP}, KL divergence is decreasing:
\begin{align*}
    \frac{d}{dt} H_\nu(\rho_t) = -J_\nu(\rho_t)
\end{align*}
where $J_\nu(\rho)$ is the relative Fisher information of $\rho$ with respect to $\nu$ under the metric $\g$:
\begin{align*}
    G_\nu(\rho) = \E_\rho\left[\left\|\nabla \log \frac{\rho}{\nu} \right\|^2_M\right].
\end{align*}
Therefore, if $\nu$ satisfies a {\em log-Sobolev inequality} with respect to $\g$, which means for any $\rho$ we have $$H_\nu(\rho) \le C_{\mathrm{LSI}} \, J_\nu(\rho),$$
then we can conclude RLD converges exponentially fast in KL divergence: $H_\nu(\rho_t) \le e^{-\frac{t}{C_{\mathrm{LSI}}}} H_\nu(\rho_0)$.

\subsection{Mirror Langevin Dynamics}

Suppose now the metric $\g(x)$ is given by the Hessian of a convex Legendre function $\phi$: $\g(x) = \nabla^2 \phi(x) \succ 0$.
The Riemannian Langevin dynamics~\eqref{Eq:RLD} becomes the following SDE, which is also studied by~\cite{ZPFP20} and~\cite{ChewiEtAl20}:
\begin{align}\label{Eq:WLD_M}
dX_t = \left(\nabla \cdot (\nabla^2 \phi(X_t)^{-1}) - \nabla^2 \phi(X_t)^{-1} \, \nabla f(X_t)\right) dt + \sqrt{2 \nabla^2 \phi(X_t)^{-1}} \, dW_t.
\end{align}
If $\nu$ satisfies log-Sobolev or Poincar\'e inequality (which is called {\em mirror Poincar\'e inequality} in~\cite{ChewiEtAl20}), then we can conclude exponential convergence rate in KL or $\chi^2$ divergence along~\eqref{Eq:WLD_M}.

The SDE~\eqref{Eq:WLD_M} requires $\nabla \cdot (\nabla^2 \phi(x)^{-1})$, which may be complicated.
Consider the dual variable $Y_t = \nabla \phi(X_t)$.
By It\^o's lemma, $Y_t$  
evolves following the {\bf Mirror Langevin Dynamics}:
\begin{align*}
dY_t = -\nabla f(\nabla \phi^\ast(Y_t)) \, dt + \sqrt{2 \nabla^2 \phi^\ast(Y_t)^{-1}} \, dW_t.
\end{align*}
In particular, the drift term simplifies and there is no third derivative involved.
The mirror Langevin dynamics is also the Riemannian Langevin dynamics~\eqref{Eq:RLD} for sampling from the pushforward distribution $\tilde \nu = (\nabla \phi)_\# \nu$ using the Hessian metric $\nabla^2 \phi^*$.
Furthermore, the $\chi^2$-divergence and KL divergence are invariant under the mirror map.
Therefore, $\nu$ satisfies LSI or Poincar\'e inequality with respect to $\nabla^2 \phi$ if and only if $\tilde \nu$ also satisfies LSI or Poincar\'e inequality with respect to $\nabla^2 \phi^*$.
Therefore, we get the same convergence guarantee in both primal and dual spaces.

\subsection{Newton Langevin Dynamics}
\label{App:NLD}

A particularly nice choice of $\phi$ is when $\phi = f$.
This gives the {\bf Newton Langevin Dynamics}, which in the primal space takes the form:
\begin{align}\label{Eq:NLD}
dX_t = \left(\nabla \cdot (\nabla^2 f(X_t)^{-1}) - \nabla^2 f(X_t)^{-1} \, \nabla f(X_t)\right) dt + \sqrt{2} \sqrt{\nabla^2 f(X_t)^{-1}} \, dW_t.
\end{align}
A remarkable property of NLD, as pointed out by~\cite{ChewiEtAl20}, is that the Poincar\'e inequality of $\nu \propto e^{-f}$ with respect to its Hessian metric $\nabla^2 f$ is always true with a uniform constant $C_{\mathrm{P}} = 1$ for {\em any} strictly log-concave distribution $\nu$, by the virtue of the Brascamp-Lieb inequality~\citep{BL76}.
This gives a uniform exponential convergence rate along NLD in $\chi^2$-divergence as well as the Wasserstein distance with respect to the metric $\nabla^2 f$;
see detailed exposition and additional consequences in~\cite{ChewiEtAl20}.

In the dual space, Newton Langevin Dynamics has a simple drift:
\begin{align}\label{Eq:NLD2}
dY_t = -Y_t \, dt + \sqrt{2 \nabla^2 f^\ast(Y_t)^{-1}} \, dW_t
\end{align}
since $\nabla f(\nabla f^*(y)) = y$.
The target distribution of NLD in the dual space is the pushforward distribution $\tilde \nu = (\nabla f)_{\#} \nu$ where $\nu \propto e^{-f}$.
The SDE~\eqref{Eq:NLD2} for sampling from $\tilde \nu$ was also pointed out by~\cite{Fathi19} from the study of Stein's kernel.

\section{Proofs of Lemmas}
\label{Sec:Proofs}

\subsection{Proof of Lemma~\ref{Lem:Contraction}: Contraction of MLD}

\begin{proof}[Proof of Lemma~\ref{Lem:Contraction}]
Assume {\bf (A1)} and {\bf (A2)}.
We will show MLD~\eqref{Eq:MLD} is contractive if $\alpha < \frac{m}{2}$.

Suppose we have two solutions $Y_t', Y_t$ of MLD~\eqref{Eq:MLD} with the same Brownian motion:
\begin{align*}
    dY_t' &= -g(Y_t') dt + \sqrt{2} A(Y_t') dW_t \\
    dY_t &= -g(Y_t) dt + \sqrt{2} A(Y_t) dW_t.
\end{align*}
Then the difference satisfies the SDE
\begin{align}\label{Eq:MLD_Coupling}
    d(Y_t' - Y_t) &= -(g(Y_t') - g(Y_t)) dt + \sqrt{2} (A(Y_t') - A(Y_t)) dW_t.
\end{align}

Recall in general that if $V_t \in \R^d$ follows a general SDE $dV_t = b(V_t) dt + G(V_t) dW_t$, then
\begin{align*}
    \frac{d}{dt} \E[\|V_t\|^2] = \E[2\langle b(V_t), V_t \rangle + \|G(V_t)\|^2_{\HS}].
\end{align*}
Then for the SDE~\eqref{Eq:MLD_Coupling} of the difference $V_t = Y_t' - Y_t$, and by applying assumptions {\bf (A1)} and {\bf (A3)}, we have
\begin{align*}
    \frac{d}{dt} \E[\|Y_t'-Y_t\|^2]
    &= -2 \E[\langle g(Y_t') - g(Y_t), Y_t' - Y_t \rangle] + 2\E[\|A(Y_t') - A(Y_t)\|^2_{\HS}] \\
    &\le -2m \E[\|Y_t'-Y_t\|^2_2] + 2\alpha \E[\|Y_t'-Y_t\|^2_2] \\
    &= -2(m-\alpha) \E[\|Y_t'-Y_t\|^2_2].
\end{align*}
We see that we have an exponential contraction if $\alpha < m$:
\begin{align}\label{Eq:Contraction}
    \E[\|Y_t'-Y_t\|^2] \le \exp\left(-2(m-\alpha) t \right) \E[\|Y_0'-Y_0\|^2] ~~~~~~ \forall ~ t \ge 0.
\end{align}
This shows that MLD~\eqref{Eq:MLD} is contractive with rate $\beta = m-\alpha$.
\end{proof}

\subsection{Proof of Lemma~\ref{Lem:Deviation}: Deviation bound of MLD}

\begin{proof}[Proof of Lemma~\ref{Lem:Deviation}]
Assume {\bf (A1)},  {\bf (A2)}, and {\bf (A3)} with $\alpha < m$.

Suppose we have two solutions $Y_t', Y_t$ of MLD~\eqref{Eq:MLD} with the same Brownian motion:
\begin{align*}
    dY_t' &= -g(Y_t') dt + \sqrt{2} A(Y_t') dW_t \\
    dY_t &= -g(Y_t) dt + \sqrt{2} A(Y_t) dW_t.
\end{align*}
Consider the shifted variables $\tilde Y_t' = Y_t' - Y_0'$ and $\tilde Y_t = Y_t - Y_0$, which satisfy:
\begin{align*}
    d\tilde Y_t' &= -g(\tilde Y_t' + Y_0') dt + \sqrt{2} A(\tilde Y_t' + Y_0') dW_t \\
    d\tilde Y_t &= -g(\tilde Y_t + Y_0) dt + \sqrt{2} A(\tilde Y_t + Y_0) dW_t.
\end{align*}
Then the difference $\tilde Y_t' - \tilde Y_t = (Y_t' - Y_0') - (Y_t - Y_0)$ satisfies:
\begin{align*}
    d(\tilde Y_t' - \tilde Y_t)
    &= -(g(\tilde Y_t' + Y_0') - g(\tilde Y_t + Y_0)) dt + \sqrt{2} (A(\tilde Y_t' + Y_0')-A(\tilde Y_t + Y_0)) dW_t.
\end{align*}

By Lemma~\ref{Lem:Contraction}, we have the contraction result~\eqref{Eq:Contraction}, which implies $\E[\|Y_t'-Y_t\|^2_2] \le \E[\|Y_0'-Y_0\|^2_2]$ for all $t \ge 0$.
Then by applying {\bf (A1)} and {\bf (A2)} and using $\alpha < m \le M$, we get
\begin{align*}
    \frac{d}{dt} &\E[\|(Y_t' - Y_0') - (Y_t - Y_0)\|^2_2] \\
    &= \frac{d}{dt} \E[\|\tilde Y_t' - \tilde Y_t\|^2_2] \\
    &= -2\E[\langle g(\tilde Y_t' + Y_0') - g(\tilde Y_t + Y_0), \tilde Y_t' - \tilde Y_t \rangle]
    + 2 \E[\|A(\tilde Y_t' + Y_0')-A(\tilde Y_t + Y_0)\|^2_{\HS}] \\
    &= -2\E[\langle g(Y_t') - g(Y_t), Y_t' - Y_t - (Y_0' - Y_0) \rangle]
    + 2 \E[\|A(Y_t')-A(Y_t)\|^2_{\HS}] \\
    &\le 2\E[\langle g(Y_t') - g(Y_t), Y_0' - Y_0 \rangle]
    + 2\alpha \E[\|Y_t'-Y_t\|^2_2]  \\
    &\le 2\E[\| g(Y_t') - g(Y_t) \|^2_2]^{1/2} \E[\|Y_0' - Y_0\|^2_2]^{\frac{1}{2}}
    + 2\alpha \E[\|Y_0'-Y_0\|^2_2] \\
    &\le 2M \E[\|Y_t'-Y_t\|^2_2]^{\frac{1}{2}} \E[\|Y_0' - Y_0\|^2_2]^{\frac{1}{2}}
    + 2\alpha \E[\|Y_0'-Y_0\|^2_2] \\
    &\le (2M + 2\alpha) \, \E[\|Y_0'-Y_0\|^2_2] \\
    &\le 4M \, \E[\|Y_0'-Y_0\|^2_2].
\end{align*}
Integrating, we conclude that for all $t \ge 0$,
\begin{align*}
    \E[\|(Y_t' - Y_0') - (Y_t - Y_0)\|^2_2] \le 4M \, \E[\|Y_0'-Y_0\|^2_2] \, t.
\end{align*}
\end{proof}

\subsection{Proof of Lemma~\ref{Lem:Growth}: Growth bound of MLD}

\begin{proof}[Proof of Lemma~\ref{Lem:Growth}]
Assume {\bf (A1)},  {\bf (A2)}, and {\bf (A3)}.
Consider the solution $Y_t$ of MLD~\eqref{Eq:MLD} starting from $Y_0$.
The centered variable $\tilde Y_t = Y_t - Y_0$ follows the SDE
\begin{align*}
    d\tilde Y_t = -g(\tilde Y_t + Y_0) dt + \sqrt{2} A(\tilde Y_t + Y_0) dW_t.
\end{align*}
Then
\begin{align*}
    \frac{d}{dt} \E[\|Y_t-Y_0\|^2_2] = \frac{d}{dt} \E[\|\tilde Y_t\|^2_2]
    &= -2\E[\langle g(\tilde Y_t + Y_0), \tilde Y_t \rangle] + 2\E[\|A(\tilde Y_t + Y_0)\|^2_{\HS}] \\
    &= \underbrace{-2\E[\langle g(Y_t), Y_t-Y_0 \rangle]}_{= ~ I} + \underbrace{2\E[\|A(Y_t)\|^2_{\HS}]}_{= ~ II}.
\end{align*}
Let us bound the two terms above.
Let $x^* = \arg\min_{x \in \X} f(x)$ 
and $y^* = \nabla \phi(x^*)$.

\paragraph{First term:}
By {\bf (A2)} and {\bf (A3)},
\begin{align*}
    I &= -2\E[\langle g(Y_t), Y_t-Y_0 \rangle] \\
    &= -2\E[\langle g(Y_t) - g(Y_0), Y_t-Y_0 \rangle] -2\E[\langle g(Y_0), Y_t-Y_0 \rangle] \\
    &\le -2\E[\langle g(Y_0), Y_t-Y_0 \rangle] \\
    &\le 2\E[\| g(Y_0) \| \cdot \|Y_t-Y_0\|] \\
    &\le \frac{1}{M^2} \E[\|g(Y_0)\|^2] + M^2 \E[\|Y_t-Y_0\|^2] \\
    &\le 2 \E[\|Y_0-y^\ast\|^2] + M^2 \E[\|Y_t-Y_0\|^2] + \frac{2}{M^2} \|g(y^\ast)\|^2.
\end{align*}
In the last step we have used $\|g(y)\|^2_2 \le 2\|g(y) - g(y^\ast)\|^2 + 2\|g(y^\ast)\|^2 \le 2M^2 \|y-y^\ast\|^2 + 2\|g(y^\ast)\|^2$.

\paragraph{Second term:}
By triangle inequality and {\bf (A1)},
\begin{align*}
    \|A(Y_t)\|^2_{\HS} 
    &\le 2\|A(Y_t) - A(Y_0)\|^2_{\HS} + 2\|A(Y_0)\|^2_{\HS} \\
    &\le 2\|A(Y_t) - A(Y_0)\|^2_{\HS} + 4\|A(Y_0)- A(y^\ast)\|^2_{\HS} + 4\|A(y^\ast)\|^2_{\HS} \\
    &\le 2\alpha \|Y_t-Y_0\|^2_2 + 4\alpha \|Y_0-y^\ast\|^2 + 4\|A(y^\ast)\|^2_{\HS}.
\end{align*}
Therefore,
\begin{align*}
    II &= 2\E[\|A(Y_t)\|^2_{\HS}] \\
    &\le 4\alpha \E[\|Y_t-Y_0\|^2_2] + 8\alpha \E[\|Y_0-y^\ast\|^2] + 8\|A(y^\ast)\|^2_{\HS}.
\end{align*}

Combining the two terms above, we get that along MLD~\eqref{Eq:MLD}:
\begin{align}\label{Eq:Calc1}
    \frac{d}{dt} \E[\|Y_t-Y_0\|^2_2] 
    &\le (M^2+4\alpha) \E[\|Y_t-Y_0\|^2_2] + D
\end{align}
where
\begin{align*}
D &= (2 + 8\alpha) \E[\|Y_0-y^\ast\|^2] + 8\|A(y^\ast)\|^2_{\HS} + \frac{2}{M^2} \|g(y^\ast)\|^2 \\
&\le 4(1 + 4\alpha) \E[\|Y_0\|^2] + 4(1+4\alpha) \|y^\ast\|^2 + 8\|A(y^\ast)\|^2_{\HS} + \frac{2}{M^2} \|g(y^\ast)\|^2.
\end{align*}
Recall in general if $V_t \ge 0$ satisfies $\frac{d}{dt} V_t \le C V_t + D$ for some $C,D > 0$, then $V_t \le e^{Ct} V_0 + \frac{D}{C}(e^{Ct}-1)$.
Furthermore, if $V_0 = 0$ and $0 < t \le \frac{1}{C}$, then 
    $V_t \le \frac{D}{C} 2Ct = 2Dt$.
Applying this to $V_t = \E[\|Y_t - Y_0\|^2]$ which satisfies~\eqref{Eq:Calc1} and $V_0 = 0$, we conclude that if $0 < t \le \frac{1}{M^2+4\alpha}$, then
\begin{align*}
    \E[\|Y_t-Y_0\|^2_2] \le \gamma t
\end{align*}
where $\gamma = 2D \le 8(1 + 4\alpha) \E[\|Y_0\|^2] + 8(1+4\alpha) \|y^\ast\|^2 + 16\|A(y^\ast)\|^2_{\HS} + \frac{4}{M^2} \|g(y^\ast)\|^2$.
\end{proof}

\subsection{Proof of Lemma~\ref{Lem:WeakError}: Local weak error of MLA}

\begin{proof}[Proof of Lemma~\ref{Lem:WeakError}]
Assume {\bf (A1)},  {\bf (A2)}, and {\bf (A3)}.
Starting from $Y_0 \in \R^d$, let $Y_t$ be the solution to the MLD~\eqref{Eq:MLD}, and let $Y_t'$ be the solution to the modified SDE with constant drift, driven by the same Brownian motion:
\begin{align*}
    dY_t &= -g(Y_t) dt + \sqrt{2} A(Y_t) dW_t \\
    dY_t' &= -g(Y_0) dt + \sqrt{2} A(Y_0) dW_t.
\end{align*}
The value $Y_h'$ at time $t = h$ is the output $\bar Y_1$ of MLA~\eqref{Eq:MLA2} from $Y_0$.
We wish to bound $\|\E[Y_h - Y_h']\|$.

Since $Y_t, Y_t'$ are coupled using the same Brownian motion, the difference $Y_t - Y_t'$ satisfies
\begin{align*}
    d(Y_t-Y_t') &= -(g(Y_t) - g(Y_0)) dt + \sqrt{2} (A(Y_t) - A(Y_0)) dW_t.
\end{align*}
Integrating, and since $Y_0 = Y_0'$, this means
\begin{align*}
    Y_h - Y_h' = -\int_0^h (g(Y_t) - g(Y_0)) dt + \sqrt{2} \int_0^h (A(Y_t) - A(Y_0)) dW_t.
\end{align*}
Taking expectation gives
\begin{align}\label{Eq:CalcWeak}
    \E[Y_h - Y_h'] = -\int_0^h \E[g(Y_t) - g(Y_0)] dt.
\end{align}
By {\bf (A2)} and Lemma~\ref{Lem:Growth}, for $0 < t \le \frac{1}{M^2+4\alpha}$ we have
\begin{align*}
\E[\|g(Y_t) - g(Y_0)\|] \le M \E[\|Y_t-Y_0\|] \le M \sqrt{\E[\|Y_t-Y_0\|^2]}  \le M \sqrt{\gamma t}.
\end{align*}
Therefore, by triangle inequality on~\eqref{Eq:CalcWeak}, for $0 < h \le \frac{1}{M^2+4\alpha}$ we have
\begin{align*}
    \|\E[Y_h - Y_h']\|
    &\le \int_0^h \E[\|g(Y_t) - g(Y_0)\|] dt \\
    &\le M \sqrt{\gamma} \int_0^h \sqrt{t} \, dt \\
    &= \frac{2}{3} M \sqrt{\gamma} \,h^{\frac{3}{2}} \\
    &= \frac{2}{3} M \left(8(1 + 4\alpha) \E[\|Y_0\|^2] + 8(1+4\alpha) \|y^\ast\|^2 + 16\|A(y^\ast)\|^2_{\HS} + \frac{4}{M^2} \|g(y^\ast)\|^2\right)^{\frac{1}{2}} h^{\frac{3}{2}} \\
    &\le \frac{2}{3} M \left( \sqrt{8(1 + 4\alpha)} \sqrt{\E[\|Y_0\|^2]} + \sqrt{8(1+4\alpha)} \|y^\ast\| + 4 \|A(y^\ast)\|_{\HS} + \frac{2}{M} \|g(y^\ast)\| \right) h^{\frac{3}{2}} \\
    &= \left(C_1 + D_1 \sqrt{\E[\|Y_0\|^2]} \right) h^{3/2}.
\end{align*}
This shows the local weak error order is at least $p_1 = \frac{3}{2}$, with maximum step size $h_1 = \frac{1}{M^2+8\alpha}$ and constants
\begin{align*}
    C_1 &= \frac{2}{3} M \left( \sqrt{8(1+4\alpha)} \|y^\ast\| + 4 \|A(y^\ast)\|_{\HS} + \frac{2}{M} \|g(y^\ast)\| \right) \\
    &\le 3M \sqrt{(1 + 4\alpha)} \,  \left( \|y^\ast\| + \|A(y^\ast)\|_{\HS} + \frac{1}{M} \|g(y^\ast)\| \right) \\
    D_1 &= \frac{2}{3} M \sqrt{8(1 + 4\alpha)} \\
    &\le 2M \sqrt{1 + 4\alpha}.
\end{align*}
\end{proof}

\subsection{Proof of Lemma~\ref{Lem:StrongError}: Local strong error of MLA}

\begin{proof}[Proof of Lemma~\ref{Lem:StrongError}] 

Assume~{\bf (A1)}, {\bf (A2)}, and {\bf (A3)}.
As in the proof of Lemma~\ref{Lem:WeakError}, consider two dynamics $Y_t', Y_t$ starting from $Y_0' = Y_0$ following the SDEs coupled with the same Brownian motion:
\begin{align*}
    dY_t &= -g(Y_t) dt + \sqrt{2} A(Y_t) dW_t \\
    dY_t' &= -g(Y_0) dt + \sqrt{2} A(Y_0) dW_t.
\end{align*}
We wish to bound $\E[\|Y_h - Y_h'\|^2]$.
The difference $Y_t - Y_t'$ satisfies
\begin{align*}
    d(Y_t-Y_t') &= -(g(Y_t) - g(Y_0)) dt + \sqrt{2} (A(Y_t) - A(Y_0)) dW_t.
\end{align*}
By~{\bf (A1)}, {\bf (A2)}, and Lemma~\ref{Lem:Growth}, for $0 < t \le \frac{1}{M^2+4\alpha}$ we have
\begin{align*}
    \frac{d}{dt} \E[\|Y_t-Y_t'\|^2_2] 
    &= -2\E[\langle g(Y_t) - g(Y_0),  Y_t-Y_t' \rangle] + 2 \E[\|A(Y_t) - A(Y_0)\|^2_{\HS}] \\
    &\le 2\E[\| g(Y_t) - g(Y_0) \|^2]^{\frac{1}{2}} \, \E[\|Y_t-Y_t'\|^2]^{\frac{1}{2}} + 2\alpha \E[\|Y_t - Y_0\|^2] \\
    &\le 2M\E[\| Y_t - Y_0 \|^2]^{\frac{1}{2}} \, \E[\|Y_t-Y_t'\|^2]^{\frac{1}{2}} + 2\alpha \E[\|Y_t - Y_0\|^2] \\
    &\le M^2 \E[\|Y_t-Y_t'\|^2] + (1+2\alpha) \E[\|Y_t - Y_0\|^2] \\
    &\le M^2 \E[\|Y_t-Y_t'\|^2] + (1 + 2\alpha) \gamma t.
\end{align*}
Equivalently, $\frac{d}{dt} (e^{-M^2 t} \E[\|Y_t-Y_t'\|^2_2]) \le e^{-M^2 t} (1 + 2\alpha) \gamma t \le (1 + 2\alpha) \gamma t$, so
\begin{align*}
    \E[\|Y_t-Y_t'\|^2_2] \le e^{M^2 t} \frac{(1 + 2\alpha)}{2} \gamma t^2.
\end{align*}
Furthermore, since $t \le \frac{1}{M^2+4\alpha} \le \frac{1}{M^2}$, we have $e^{M^2 t} \le e < 3$, so
\begin{align*}
    \E[\|Y_t-Y_t'\|^2_2] &\le \frac{3}{2} (1 + 2\alpha) \gamma t^2 \\
    &= 3 (1 + 2\alpha) \left(8(1 + 4\alpha) \E[\|Y_0\|^2] + 8(1+4\alpha) \|y^\ast\|^2 + 16\|A(y^\ast)\|^2_{\HS} + \frac{4}{M^2} \|g(y^\ast)\|^2\right) t^2 \\
    &= (C_2^2 + D_2^2 \, \E[\|Y_0\|^2]) \, t^2.
\end{align*}
This shows the local strong error order is at least $p_2 = 1$ with maximum step size $h_2 = \frac{1}{M^2+4\alpha}$ and constants
\begin{align*}
    C_2 &= \left(24 (1 + 2\alpha)(1 + 4\alpha) \|y^\ast\|^2 + 48 (1 + 2\alpha)\|A(y^\ast)\|^2_{\HS} + \frac{12(1+2\alpha)}{M^2} \|g(y^\ast)\|^2  \right)^{\frac{1}{2}} \\
           &\le 5 (1+4\alpha) \|y^\ast\| + 7\sqrt{1+2\alpha} \|A(y^\ast)\|_{\HS} + \frac{4\sqrt{1+2\alpha}}{M} \|g(y^\ast)\| \\
           &\le 7(1+4\alpha) \left( \|y^\ast\| + \|A(y^\ast)\|_{\HS} + \frac{1}{M} \|g(y^\ast)\| \right) \\
    D_2 &= \sqrt{24 (1 + 2\alpha)(1 + 4\alpha)} \\
    &\le 5 (1+4\alpha).
\end{align*}
\end{proof}

\section{An Analogy: Geometric Brownian Motion}
\label{App:GBM}

We wondered if our requirement on the modified self-concordance parameter $\alpha$ being upper-bounded is an artifact of our proof technique. Thus we did some simple calculations on {\bf Geometric Brownian Motion (GBM)} which is an SDE with multiplicative noise and yet admitting close-form solution. It is not an exact example of MLD but only an analogy; nevertheless, GBM does need $\alpha$ to be bounded in order to converge.

More precisely, consider GBM on $\R_+ = (0,\infty)$ which follows the stochastic differential equation:
\begin{align}\label{Eq:GBM}
dY_t = -Y_t \, dt + \sqrt{2\alpha} \, Y_t \, dW_t
\end{align}
where $dW_t$ is the standard Brownian motion on $\R$.
This has exact solution
\begin{align*}
Y_t = Y_0 \exp\left(-(1+\alpha)t + \sqrt{2\alpha} \, W_t\right).
\end{align*}

By a standard calculation, we see there is a threshold $\alpha < 1$ for the convergence of $Y_t$ as $t \to \infty$.
Recall since $W_t \sim \N(0,t)$, $\E[\exp(\sigma W_t)] = e^{\sigma^2 t/2}$ for all $\sigma > 0$.
Then
\begin{align*}
\E[Y_t^2] &= \E[Y_0^2] \, e^{-2(1+\alpha)t} \, \E[\exp(2\sqrt{2\alpha} W_t)]
  = \E[Y_0^2] \, e^{-2(1-\alpha)t}.  
\end{align*}
Therefore,
\begin{align*}
\lim_{t \to \infty} \E[Y_t^2] = \begin{cases}
0 ~~~ & \text{ if } \alpha < 1 \\
\E[Y_0^2] & \text{ if } \alpha = 1 \\
\infty & \text{ if } \alpha > 1.
\end{cases}
\end{align*}

Now consider a synchronous coupling $Y_t, \tilde Y_t$ following GBM~\eqref{Eq:GBM} with the same Brownian motion:
\begin{align*}
Y_t &= Y_0 \exp\left(-(1+\alpha)t + \sqrt{2\alpha} \, W_t\right) \\
\tilde Y_t &= \tilde Y_0 \exp\left(-(1+\alpha)t + \sqrt{2\alpha} \, W_t\right).
\end{align*}
Then
\begin{align*}
\E[(Y_t - \tilde Y_t)^2] &= \E[(Y_0 - \tilde Y_0)^2] \, e^{-2(1+\alpha)t} \, \E[\exp(2\sqrt{2\alpha} W_t)]
= \E[(Y_0 - \tilde Y_0)^2] \, e^{-2(1-\alpha)t}.
\end{align*}
Thus, we see that GBM is a contraction if and only if $\alpha < 1$.
In particular, we also have
\begin{align*}
\lim_{t \to \infty} \E[(Y_t-\tilde Y_t)^2] = \begin{cases}
0 ~~~ & \text{ if } \alpha < 1 \\
\E[(Y_0-\tilde Y_0)^2] & \text{ if } \alpha = 1 \\
\infty & \text{ if } \alpha > 1.
\end{cases}
\end{align*}

GBM~\eqref{Eq:GBM} is an instance of MLD~\eqref{Eq:MLD} (and in fact NLD~\eqref{Eq:NLD}) with $\phi = f$ where
\begin{align}\label{Eq:GBMf}
\frac{1}{\sqrt{(\phi^\ast)''(y)}} = \sqrt{\alpha} y
\end{align}
which is $\sqrt{\alpha}$-Lipschitz, so it satisfies modified self-concordance~{\bf (A1)} with parameter $\alpha$.
Since $\phi = f$, it satisfes relative smoothness~{\bf (A2)} and relative strong convexity~{\bf (A3)} with $M = m = 1$.
Note our assumption in Theorem~\ref{Thm:Main} is $\alpha < m = 1$, which  c is tight for GBM to contract, as well as to determine if there is a $t\to\infty$ limit.

\section{Example: Log-Barrier on a Polytope}
\label{App:LogBarrier}

Let $\X$ be the polytope (not necessarily bounded)
$$\X = \{x \in \R^d \colon a_i^\top x \ge b_i ~~ \forall \, i = 1,\dots,m\}$$
for some $a_1,\dots,a_m \in \R^d$ and $b_1,\dots,b_m \in \R$.
Consider the log-barrier function defined in the interior of $\X$:
\begin{align}
    \phi(x) = -\sum_{i=1}^m \log (a_i^\top x - b_i).
\end{align}
Recall that $\phi$ satisfies the classical self-concordance condition with a constant parameter $2$.
Let $\alpha$ be the modified self-concordance parameter of $\phi$.
For some polytopes, such as the positive orthant, $\alpha$ is also a constant (because the Hessian is diagonal and the dimensions are independent).
For general polytopes, however, $\alpha$ can be arbitrarily large.
Here we show $\alpha$ can be as large as the square inverse of the smallest singular value of the constraint matrix; we also construct an explicit example in two dimension.

Without loss of generality we may assume $\|a_i\| = 1$ for $i = 1,\dots,m$.
Let $A = ( a_1, \cdots, a_m ) \in \R^{d \times m}$ be the constraint matrix, so the polytope is described by $A^\top x \ge b$.
Let the singular values of $A$ be $\sigma_1 \ge \dots \ge \sigma_d \ge 0$ (assuming $d \le m$).
Then $\sum_{i=1}^d \sigma_i^2 = \Tr(AA^\top) = \Tr(A^\top A) = \sum_{i=1}^m \|a_i\|^2 = m$;
but $\sigma_d = \min_i \sigma_i$ can be small or $0$.
For $x \in \X$, let $S_x \in \R^{m \times m}$ be the diagonal matrix with entries $a_i^\top x - b_i$.

The gradient of $\phi$ is
\begin{align*}
    \nabla \phi(x) = -\sum_{i=1}^m \frac{a_i}{a_i^\top x - b_i}.
\end{align*}
Then for $x, x' \in \X$, we have
\begin{align*}
    \nabla \phi(x') - \nabla \phi(x)
    &= \sum_{i=1}^m \left(\frac{1}{a_i^\top x - b_i} - \frac{1}{a_i^\top x' - b_i} \right) a_i \\
    &= \sum_{i=1}^m \frac{a_i^\top (x'-x)}{(a_i^\top x - b_i)(a_i^\top x' - b_i)} a_i \\
    &= \sum_{i=1}^m \frac{a_i a_i^\top}{(a_i^\top x - b_i)(a_i^\top x' - b_i)} (x'-x) \\
    &= A S_x^{-1} S_{x'}^{-1} A^\top (x'-x).
\end{align*}
Therefore,
\begin{align*}
    \|\nabla \phi(x') - \nabla \phi(x)\|^2 
    &= \|A S_x^{-1} S_{x'}^{-1} A^\top (x'-x) \|^2 \\
    &= (x'-x)^\top A S_{x'}^{-1} S_x^{-1} A^\top A S_x^{-1} S_{x'}^{-1} A^\top (x'-x) \\
    &= v(x,x')^\top A^\top A v(x,x')
\end{align*}
where 
$$v(x,x') = S_x^{-1} S_{x'}^{-1} A^\top (x'-x) \in \R^m.$$

The Hessian is
\begin{align*}
    \nabla^2 \phi(x) = \sum_{i=1}^m \frac{a_i a_i^\top}{(a_i^\top x - b_i)^2} = A S_x^{-2} A^\top.
\end{align*}
As a square-root, we can choose:
\begin{align*}
    \sqrt{\nabla^2 \phi(x)} = A S_x^{-1}
    = \begin{pmatrix} \dfrac{a_1}{a_1^\top x - b_1} & \cdots & \dfrac{a_m}{a_m^\top x - b_m}
    \end{pmatrix}
\end{align*}
since indeed $\sqrt{\nabla^2 \phi(x)} \sqrt{\nabla^2 \phi(x)}^\top = A S_x^{-1} S_x^{-1} A^\top = \nabla^2 \phi(x)$.

For $x,x' \in \X$, we have that 
\begin{align*}
    \sqrt{\nabla^2 \phi(x')} - \sqrt{\nabla^2 \phi(x)}
    &= A (S_{x'} - S_x) \\
    &= -\begin{pmatrix} \dfrac{a_1 a_1^\top (x'-x)}{(a_1^\top x' - b_1)(a_1^\top x - b_1)} & \cdots & \dfrac{a_m a_m^\top (x'-x)}{(a_m^\top x' - b_m)(a_m^\top x - b_m)}
    \end{pmatrix}.
\end{align*}
Therefore,
\begin{align*}
    \|\sqrt{\nabla^2 \phi(x')} - \sqrt{\nabla^2 \phi(x)}\|^2_{\HS} 
    &= \sum_{i=1}^m \left\| \frac{a_1 a_1^\top (x'-x)}{(a_1^\top x' - b_1)(a_1^\top x - b_1)} \right\|^2 \\
    &= \sum_{i=1}^m (x'-x)^\top \frac{a_1 a_1^\top a_1 a_1^\top }{(a_1^\top x' - b_1)^2(a_1^\top x - b_1)^2} (x'-x) \\ 
    &= (x'-x)^\top \left( \sum_{i=1}^m \frac{a_1 a_1^\top }{(a_1^\top x' - b_1)^2(a_1^\top x - b_1)^2} \right) (x'-x) \\ 
    &= (x'-x)^\top AS_{x'}^{-2} S_x^{-2} A^\top (x'-x) \\
    &= \|v(x,x')\|^2.
\end{align*}

\paragraph{Modified self-concordance.}
The modified self-concordance parameter is
\begin{align*}
    \alpha &= \sup_{x,x' \in \X} \frac{\|\sqrt{\nabla^2 \phi(x')} - \sqrt{\nabla^2 \phi(x)}\|^2_{\HS}}{\|\nabla \phi(x') - \nabla \phi(x)\|^2_2} \\
    &= \sup_{x,x' \in \X} \frac{\|v(x,x')\|^2}{v(x,x')^\top  A^\top A v(x,x')} \\
    &\le \sup_{v \in \R^d} \frac{\|v\|^2}{v^\top A^\top A v} \\
    &= \max_{i = 1,\dots,d} \frac{1}{\sigma_i^2} \\
    &= \frac{1}{\sigma_d^2}.
\end{align*}
This shows the modified self-concordance parameter can be as large as $\frac{1}{\sigma_d^2}$, by choosing appropriate $x,x'$.
For some polyhedra $\sigma_d \approx 0$, so $\alpha \approx 1/\sigma_d^2$ can be arbitrarily large.

\paragraph{Example in two dimension.}
Let $d=2$, and consider
$$a_1 = 
\begin{pmatrix}
1 \\ 0
\end{pmatrix},
~~~~~~
a_2 = 
\begin{pmatrix}
\sqrt{1-\epsilon^2} \\ \epsilon
\end{pmatrix}
$$
for some small $\epsilon > 0$, and
$b_1 = b_2 = 0$.
This defines the intersection of two halfspaces:
$$\X = \{x = (x_1,x_2) \colon x_1 \ge 0, ~  \sqrt{1-\epsilon^2} \,  x_1 + \epsilon x_2 \ge 0\}.$$
The constraint matrix is
$A = \begin{pmatrix}
1 & \sqrt{1-\epsilon^2} \\
0 & \epsilon
\end{pmatrix}.$
We have 
$A^\top A = \begin{pmatrix}
1 & \sqrt{1-\epsilon^2} \\
\sqrt{1-\epsilon^2} & 1
\end{pmatrix}$
which has eigenvalues $\sigma_1^2 = 1+\sqrt{1-\epsilon^2}$ and $\sigma_2^2 = 1-\sqrt{1-\epsilon^2}$.
Note that if $\epsilon$ is small, $\sigma_1^2 \approx 2$ and $\sigma_2^2 \approx \epsilon^2/2$.
The corresponding eigenvectors are $v_1 = \begin{pmatrix}
1 \\ 1
\end{pmatrix}$ and $v_2 = \begin{pmatrix}
1 \\ -1
\end{pmatrix}$.

Let us choose
$$x = \begin{pmatrix}
1 \\ 0
\end{pmatrix}, ~~~~~
x' = \begin{pmatrix}
a \\ b
\end{pmatrix}$$
for some constant $a,b \in \R$.
For simplicity let $s = \sqrt{1-\epsilon^2}$. 
We require $x' \in \X$, so $a \ge 0$ and $b \ge -\frac{s}{\epsilon}  a$.
We have 
$$A^\top x = \begin{pmatrix}
1 \\ s
\end{pmatrix}, ~~~~~~
A^\top x' = \begin{pmatrix}
a \\ s a + \epsilon b
\end{pmatrix}
$$
and
$$A^\top(x'-x) = \begin{pmatrix}
a-1 \\ s (a-1) + \epsilon b
\end{pmatrix}.
$$
We also have
$$S_x = \begin{pmatrix}
1 & 0 \\
0 & s
\end{pmatrix}, ~~~~~
S_x = \begin{pmatrix}
a & 0 \\
0 & s a + \epsilon b
\end{pmatrix}.
$$
Then
\begin{align*}
v(x,x') 
&= S_x^{-1} S_{x'}^{-1} A^\top (x'-x) \\
&= \begin{pmatrix}
\frac{1}{a} & 0 \\
0 & \frac{1}{s (s a + \epsilon b)}
\end{pmatrix}
\begin{pmatrix}
a-1 \\ s(a-1) + \epsilon b
\end{pmatrix} \\
&= \begin{pmatrix}
\frac{a-1}{a} \\ \frac{s(a-1) + \epsilon b}{s ( sa + \epsilon b)}
\end{pmatrix}
\end{align*}
We want this to be proportional to $v_2 = \begin{pmatrix}
1 \\ -1
\end{pmatrix}$, 
so we want
\begin{align}\label{Eq:Want}
    \frac{a-1}{a} + \frac{s(a-1) + \epsilon b}{s (sa + \epsilon b)} = 0.
\end{align}
We can solve for $b$ in terms of $a$:
$$b = -\frac{a(a-1) s (s+1)}{\epsilon((a-1) s + a)}.$$
We can see that for all $a \ge 0$, this choice of $b$ satisfies the constraint $b \ge -\frac{s}{\epsilon}  a$, so $x' \in \X$.

Explicitly, suppose we choose
\begin{align*}
    a &= 2 \\
    b &= -\frac{2 s(s+1)}{\epsilon(s+2)}
\end{align*}
which satisfies the condition $b \ge -\frac{2s}{\epsilon}$.
We can verify directly that the condition~\eqref{Eq:Want} holds:
\begin{align*}
    \frac{a-1}{a} + \frac{s(a-1) + \epsilon b}{s ( s a + \epsilon b)}
    &= \frac{1}{2} + \frac{s -\frac{2 s(s+1)}{(s+2)}}{s ( 2s -\frac{2 s(s+1)}{(s+2)})} 
    = \frac{1}{2} + \frac{-\frac{s^2}{s+2}}{s(\frac{2s}{(s+2)})} 
    = \frac{1}{2} - \frac{1}{2} 
    = 0.
\end{align*}
Then with this choice
$$x = \begin{pmatrix}
1 \\ 0
\end{pmatrix}, ~~~~~
x' = \begin{pmatrix}
2 \\ -\frac{2 s(s+1)}{\epsilon(s+2)}
\end{pmatrix}$$
we have that $v(x,x') = \frac{1}{2} v_2$, i.e.\ proportional to the eigenvector of $A^\top A$ with small eigenvalue $\sigma_2^2$.
Then $A^\top A v(x,x') = \sigma_2^2 v(x,x')$, and this gives the bound for the modified self-concordance parameter:
\begin{align*}
\alpha \ge \frac{\|v(x,x')\|^2}{v(x,x')^\top A^\top A v(x,x')}
= \frac{\|v(x,x')\|^2}{\sigma_2^2 \|v(x,x')\|^2} = \frac{1}{\sigma_2^2}
= \frac{1}{1-\sqrt{1-\epsilon^2}}
\approx \frac{2}{\epsilon^2}.
\end{align*}
Thus, by setting $\epsilon \to 0$ we can make $\alpha$ as large as we want.
However, note that the case $\epsilon = 0$ is nice and we have $\alpha = 1$, because the domain is a half-space and the problem reduces to one dimension.
This example shows the definition of modified self-concordance is not stable.

\end{document}